\theoremstyle{plain}
\newtheorem{theorem}{Theorem}[section]
\newtheorem{corollary}[theorem]{Corollary}
\newtheorem{lemma}[theorem]{Lemma}
\newtheorem{proposition}[theorem]{Proposition}
\newtheorem{definition}[theorem]{Definition}
\newcommand{\RealVect}[1]{{\mathbb R}^{#1}}
\newcommand{\cws}{\stackrel{*}{\to}} % converges weak-star
\newcommand{\Ud}[1]{\tilde{U}_d^{#1}} % Normalized potential 
\newcommand{\Rk}[3]{\frac{1}{|{#1} - {#2}|^{#3}}} % The Riesz kernel
\newcommand{\Hdrest}[1]{\Hd_{#1}} %Hausdorff measure restricted to ...
\def\Rd{\RealVect{d}} % d-dimensional real vector field
\def\Rp{\RealVect{p}} % p-diemnsional real vector field
\def\Rpp{\RealVect{p'}} % p'-diemnsional real vector field
\def\MA{{\mathcal M}(A)} % Finite Borel measures with support in A
\def\MAp{{\mathcal M}_1(A)}% Borel probability measures with support in A
\def\Es{{\mathcal E}_s} % Finite Borel measures with finite s-energy
\def\Hd{{\mathcal H}^d} % d-dimensional Hausdorff measure
\def\Ld{{\mathcal L}^d} % d-dimensional Lebesgue measure
\def\Hdr{\Hdrest{A}} % Hausdorff measure restricted to A
\def\Id{\tilde{I}_d} % Normalized energy 
\def\Is{I_s} % Riesz s energy
\DeclareMathOperator{\diam}{diam}
\DeclareMathOperator{\dist}{dist}
\DeclareMathOperator{\supp}{supp}
\begin{document}

\title{Riesz $s$-equilibrium measures on $d$-rectifiable sets as $s$ approaches $d$}

\author{Matthew T. Calef}
\author{Douglas P. Hardin}\thanks{This research was supported, in part,
by the U. S. National Science Foundation under grants DMS-0505756 and DMS-0808093.}
\address{M. T. Calef, D. P. Hardin:
Department of Mathematics, 
Vanderbilt University, 
Nashville, TN 37240, 
USA }
\email{Matthew.T.Calef@Vanderbilt.Edu}
\email{Doug.Hardin@Vanderbilt.Edu}

\maketitle

\begin{abstract}
Let $A$ be a compact  set in $\Rp$ of Hausdorff dimension $d$.  For $s\in(0,d)$, the Riesz $s$-equilibrium measure $\mu^s$ is the unique Borel probability measure   with support in $A$ that
minimizes  $$ \Is(\mu):=\iint\Rk{x}{y}{s}d\mu(y)d\mu(x)$$  over all such probability measures.   If  $A$ is strongly $(\Hd, d)$-rectifiable, then $\mu^s$ converges in the weak-star topology  to normalized $d$-dimensional Hausdorff measure restricted to $A$  as $s$ approaches $d$ from below.  
\end{abstract}

\keywords{Riesz potential, equilibrium measure, $d$-rectifiable}

\section{Introduction}

Let $A$ be a compact subset of $\Rp$ with positive and finite
$d$-dimensional Hausdorff measure $\Hd(A)$. Let $\MA$ denote the set of Radon
measures with support in $A$, and $\MAp\subset\MA$ the Borel probability
measures with support in $A$. The Riesz $s$-energy of a measure
$\mu\in\MA$ is defined by
$$I_s(\mu) :=\iint \Rk{x}{y}{s} d\mu(y)d\mu(x).$$ 
If $s\in(0,d)$, then
there is a unique measure $\mu^s=\mu^{s,A}\in\MAp$  called the {\em
($s$-)equilibrium measure on $A$} such that $I_s(\mu^s)<I_s(\nu)$ for any measure
$\nu\in\MAp\backslash\{\mu^s\}$, while, for $s\geq d$,
$I_s(\nu)=\infty$ for any non-trivial measure $\nu\in\MA$ (cf.~\cite{Landkof1,Mattila1}).  The uniqueness of the equilibrium measure arises from the positivity of the Riesz kernel (cf. \cite{Gotz1, Landkof1}.  
For example, in the case that $A$ is the interval $[-1,1]$ and $s\in(0,1)$,  it is well-known (cf.~\cite{HardinSaff2}) that $d\mu^s(x)= {c_s} {(1-x^2)^{\frac{s-1}{2}}} dx$ where $c_s$ is chosen so that $\mu^s$ is a probability measure. 

In this paper, we investigate the behavior of $\mu^s$ as $s$ approaches $d$ from below.  For $A=[-1,1]$, we see directly from the above expression  that $\mu^s$ converges in the weak-star sense as $s\uparrow 1$ to normalized Lebesgue measure restricted to $A$.  It is natural to ask how general is this phenomena.  
  We are further motivated by recent results 
concerning the following related discrete minimal energy problem. For a configuration of $N\geq2$ points $\omega_N:=\{x_1,
\dots,x_N\}$ and $s>0$, the Riesz $s$-energy of $\omega_N$ is defined by 
$$
E_s(\omega_N) := \sum_{\stackrel{i,j=1}{i\ne j}}^N \Rk{x_i}{x_j}{s}.
$$
The compactness of $A$ and the lower semicontinuity of the Riesz kernel imply that there is a (not necessarily unique) 
configuration $\omega_N^s\subset A$ that minimizes $E_s$ over all $N$-point configurations on $A$. When $s<d$ the above 
continuous and discrete problems are related by the following two results (cf.~\cite{Landkof1}). First, $E_s(\omega_N^s)/N^2 \to 
I_s(\mu^s)$ as $N\to\infty$. Second, the sequence of configurations $\{\omega_N^s\}_{N=1}^\infty$ has asymptotic distribution 
$\mu^s$, that is, the sequence of discrete  measures  $$\mu^{s,N} := \frac{1}{N}\sum_{x\in\omega_N^s} 
\delta_{x}$$  (where $\delta_x$ denotes the unit atomic measure at $x$) converges to $\mu^s$ in the weak-star topology on $\MA$ as $N\to\infty$. We use a starred arrow to denote weak-star convergence, that is, for $s\in(0,d)$ we have
\begin{equation}\label{eq:motive1}
\mu^{s,N}\cws\mu^s \qquad \text{as $N\to\infty$.}
\end{equation}

In the case $s\geq d$, the discrete minimal energy problem is well-posed even though the continuous problem is not. Recently, 
asymptotic results for the discrete minimal energy problem were obtained in~\cite{HardinSaff1} and~\cite{BHS1} for 
  this range of $s$ and the case 
that $A$ is a $d$-rectifiable set where   $A$ is {\em $d$-rectifiable} (cf. \cite[\S 3.2.14]{Federer1}) if it is the Lipschitz image 
of a bounded set in $\Rd$. In this case, 
\begin{equation}\label{eq:motive2}
\mu^{s,N}\cws\Hdr/\Hd(A)\qquad \text{as $N\to \infty$.}
\end{equation} (Here and in the rest of the paper we use the notation $
\mu_E$ to denote the restriction of a measure $\mu$ to a $\mu$-measurable set $E$. e.g. $\Hdr = \Hd(\cdot \cap A)$.) For 
technical reasons, the results in~\cite{HardinSaff1} and~\cite{BHS1} for the  case $s=d$ further require that $A$ be a subset of a $d$-dimensional $C^1$ manifold, although it is conjectured that this hypothesis is unnecessary.

The limits \eqref{eq:motive1} and \eqref{eq:motive2} suggest that  
$\mu^s\cws\Hdr/\Hd(A)$ as $s\uparrow d$ whenever $A$ is  $d$-rectifiable.   If $A$ is strongly $(\Hd, d)$-rectifiable   (see Definition~\ref{df:strongrect} below), we show that this is indeed the case.  A primary tool in our work is the following normalized $d$-energy 
of a measure $$
\Id(\mu) := \lim_{s\uparrow d} (d-s)I_s(\mu),
$$
which we show is well-defined  for every measure  $\mu\in\MA$  and is uniquely minimized over $\MAp$ by the measure $\lambda^d := \Hdr/\Hd(A)$.

A map $f:A\to\Rpp$ is {\em Lipschitz} if there is a constant $L$ such that,
for any $x$, $y\in A$, $$|f(x)-f(y)|<L|x-y|,$$ 
and is {\em bi-Lipschitz} if there is a constant $L$ such that
for any $x$, $y\in A$, $$\frac{1}{L}|x-y| < |f(x)-f(y)|<L|x-y|.$$ 

A set $A\subset\Rp$ is {\em $(\Hd,d)$-rectifiable} (cf. \cite[\S 3.2.14]{Federer1})  if $\Hd(A)<\infty$ and there exists a countable collection $E_1,E_2,\dots$ of $d$-rectifiable sets that cover $\Hd$-almost all of $A$. That is, there exists a countable collection of bounded subsets of $\Rd$ $K_1,K_2,\dots$ and a corresponding collection of Lipschitz maps, $\varphi_1:K_1\to\Rp,\varphi_2:K_2\to\Rp,\dots$ such that $$\Hd\left(A\backslash \bigcup_{i=1}^\infty \varphi_i(K_i)\right) = 0.$$    Moreover, it is a result of Federer \cite[\S 3.2.18]{Federer1}) that if $A$ is $(\Hd, d)$-rectifiable then for every $\varepsilon>0$, the Lipschitz maps and the bounded sets may be chosen such that each $\varphi_i$ is bi-Lipschitz  with constant less than $1+\varepsilon$, each $K_i$  is compact  and the sets $\varphi_1(K_1), \varphi_2(K_2),\dots$ are pairwise disjoint.  For such a choice of the $\varphi_i$ and $K_i$ there is an $N=N(\varepsilon)$ such that $$\Hd\left(A\backslash \bigcup_{i=1}^N\varphi_i(K_i)\right) < \varepsilon.$$ 

The following definition of strong $(\Hd, d)$-rectifiability strengthens this condition in that for each $\varepsilon>0$ there must be a finite collection of the mappings as above such that the portion of $A$ not covered by the  union is of strictly lower dimension.  

\begin{definition}\label{df:strongrect}
We say that a set $A\subset\Rp$ is {\em strongly $(\Hd, d)$-rectifiable} if,
for every $\varepsilon>0$, there is a finite collection of compact
subsets of $\Rd$ $K_1,\dots,K_N$ and a corresponding set of
bi-Lipschitz maps $\varphi_1:K_1\to\Rp,\dots,\varphi_N:K_N\to\Rp$
such that
\begin{itemize}
\item[1.] The bi-Lipschitz constant of each map is less than $1+\varepsilon$.
\item[2.] $\Hd(\varphi_i(K_i)\cap\varphi_j(K_j)) = 0$ for all $i\ne j$.
\item[3.] $\dim \left(A\backslash \bigcup_{i=1}^N \varphi_i(K_i)\right) < d$.
\end{itemize}
\end{definition}
Note that compact subsets of $d$-dimensional $C^1$ manifolds are strongly $(\Hd, d)$-rectifiable and any strongly $(\Hd,d)$-rectifiable set is $(\Hd,d)$-rectifiable.

The {\em  Riesz $s$-potential of a measure $\mu\in\MA$ at a point $x\in\Rp$} is given by
$$
U_s^{\mu}(x) := \int \Rk{x}{y}{s}d\mu(y).
$$
If $\mu\in\MAp$, then $\lim_{s\uparrow d} U_s^{\mu}(x)= \infty$ for all $x$ in some set of positive $\mu$-measure 
and hence, as mentioned previously, $\lim_{s\uparrow d} I_s(\mu)=\infty$ (cf. \cite{Mattila1}).
For $x\in\Rp$, we define  (when it exists) the following normalized $d$-potential 
\begin{equation}\label{df:Ud}
\Ud{\mu}(x):=\lim_{s\uparrow d}(d-s)U_s^{\mu}(x).
\end{equation}
In certain cases, $\Ud{\mu}(x)$ behaves like an average density of $\mu$ at $x$.   In particular,  it is shown by Hinz in  \cite{Hinz1} that, if there is a constant $C=C(x)$ such that $\mu(B(x,r)) < Cr^d$ for all $r>0$, then $\Ud{\mu}(x)$ equals $d$ times  the order-two density defined by Bedford and Fisher in~\cite{BedfordFisher1}  
$$
\lim_{\varepsilon\to0}\frac{1}{|\ln \varepsilon|}\int_\varepsilon^1\frac{\mu(B(x,r))}{r^d}\frac{1}{r}\, dr
$$ at any point $x$ where this limit exists. This in turn equals the usual density 
$$D_\mu(x):=\lim_{r\to 0}\frac{\mu(B(x,r))}{r^d}$$ at any point $x$ where $D_\mu(x)$ exists. (Here $B(x,r)$ denotes the closed unit ball centered at $x$ of radius $r$. The corresponding open ball is denoted $B(x,r)^0$.) 
Note that the order-two density exists for many measures for which the density does not  (cf. ~\cite{Zahle3} and references therein). 

 Finally, we remark that  in  \cite{Putinar1} M. Putinar  considered a different normalized Riesz $d$-potential   in his work on solving inverse moment problems.    
\subsection{Main Results}
Our first theorem asserts that the normalized $d$-energy $\Id$ is well defined and gives rise to a minimization problem with a unique solution. Note that we   choose a normalization of $\Hd$ so that the $\Hd$-measure of $B(0,1)\subset\Rd$ is $2^d$. 

\begin{theorem}\label{th:Existence}
Let $A\subset\Rp$ be compact and strongly $(\Hd, d)$-rectifiable such that $\Hd(A)>0$. Let $\lambda^d := \Hdr/\Hd(A)$. Then 
\begin{itemize}
\item[1.]$\Id(\mu)$ exists as an extended real number for every measure $\mu\in\MA$ and 
$$
\Id(\mu)=\left\{ \begin{array}{cc}
2^dd\int\left(\frac{d\mu}{d\Hdr}\right)^{2}d\Hdr & \qquad\text{if }\mu\ll\Hdr,\\ 
\infty & \qquad\text{otherwise.}\end{array}\right.
$$
\item[2.] If, for some measure $\mu\in\MA$ $\Id(\mu)<\infty$, then $\Ud{\mu}$ exists and is finite $\mu$-a.e. and 
$$\Id(\mu) = \int \Ud{\mu} d\mu.$$
\item[3.] $\Id(\lambda^d) < \Id(\nu)$ for every measure $\nu \in \MAp\backslash\{\lambda^d\}$.
\end{itemize}
\end{theorem}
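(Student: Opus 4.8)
The plan is to prove the energy formula of part~1 first, from which parts~2 and~3 follow with little extra work. The main tool is the elementary identity, valid for $x\in A$ with $R_0:=\diam A$,
\begin{equation*}
U_s^\mu(x) = s\int_0^{R_0}\frac{\mu(B(x,r))}{r^{s+1}}\,dr + \frac{\mu(A)}{R_0^s},
\end{equation*}
obtained by writing $\Rk{x}{y}{s}=s\int_{|x-y|}^\infty r^{-s-1}\,dr$ and using Fubini. Multiplying by $(d-s)$ and letting $s\uparrow d$, the second term vanishes and the first is governed by the $d$-dimensional density of $\mu$ at $x$: if $\mu(B(x,r))\sim D\,r^{d}$ then the radial integral yields $\Ud{\mu}(x)=d\,D$. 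Under the normalization $\Hd(B(0,1))=2^d$ the rectifiable set $A$ carries $\Hd$-density $2^d$ at $\Hd$-almost every point, so for $\mu\ll\Hdr$ with density $f=d\mu/d\Hdr$ the Besicovitch differentiation theorem gives $D_\mu(x)=2^d f(x)$, whence $\Ud{\mu}(x)=2^d d\,f(x)$ for $\Hd$-a.e.\ $x$ (this is the order-two density identity of Hinz recalled in the introduction). Strong $(\Hd,d)$-rectifiability enters here by presenting $A$, away from a remainder of dimension strictly below $d$, as finitely many bi-Lipschitz images of compact subsets of $\Rd$ with constants near $1$, which reduces the density computation to the flat Euclidean case.

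For $\mu\ll\Hdr$ I would then justify the interchange $\Id(\mu)=\int\Ud{\mu}\,d\mu$. Fatou's lemma applied to $(d-s)U_s^\mu$ gives the lower bound $\LEId(\mu)\ge\int\Ud{\mu}\,d\mu=2^d d\int f^2\,d\Hdr$, which already forces $\Id(\mu)=\infty$ when $f\notin L^2(\Hdr)$. When $\int f^2\,d\Hdr<\infty$, the integral identity yields the uniform estimate $(d-s)U_s^\mu(x)\le C\,\sup_{r>0}\mu(B(x,r))/r^{d}$, a Hardy--Littlewood maximal density that is $\mu$-integrable when $f\in L^2(\Hdr)$; dominated convergence then supplies the matching upper bound. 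This establishes part~2 and the formula of part~1 for absolutely continuous $\mu$.

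The remaining, and principal, obstacle is the dichotomy: $\Id(\mu)=\infty$ whenever $\mu$ is not absolutely continuous with respect to $\Hdr$. Writing $\mu=\mu_{\mathrm{ac}}+\mu_{\mathrm{s}}$, the singular part $\mu_{\mathrm{s}}$ is carried by an $\Hd$-null set $N$; since $\Hd(N)=0$, $\mu_{\mathrm{s}}$ concentrates positive mass into sets of arbitrarily small $d$-dimensional content, so a Frostman/density comparison shows $\limsup_{r\to 0}\mu(B(x,r))/r^{d}=\infty$ for $\mu_{\mathrm{s}}$-a.e.\ $x$ and forces $(d-s)\Is(\mu)$ to diverge as $s\uparrow d$. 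Combined with the previous paragraph this gives part~1 in full. Making the $0\cdot\infty$ balance in $(d-s)\Is$ rigorous for the singular part is the delicate point and would require the most care.

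Part~3 is then immediate from the formula. Any $\nu\in\MAp$ with $\nu\not\ll\Hdr$ has $\Id(\nu)=\infty$, while $\Id(\lambda^d)=2^d d/\Hd(A)<\infty$, so such $\nu$ cannot be minimal. For $\nu\ll\Hdr$ with density $f$, the normalization $\int_A f\,d\Hd=1$ and the Cauchy--Schwarz inequality give
\begin{equation*}
1=\left(\int_A f\,d\Hd\right)^2\le\Hd(A)\int_A f^2\,d\Hd,
\end{equation*}
so $\Id(\nu)=2^d d\int_A f^2\,d\Hd\ge 2^d d/\Hd(A)=\Id(\lambda^d)$, with equality precisely when $f$ is $\Hd$-a.e.\ constant, i.e.\ $f\equiv 1/\Hd(A)$ and $\nu=\lambda^d$. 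This is the asserted strict minimality.
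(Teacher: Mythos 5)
Your handling of the absolutely continuous case and of part 3 tracks the paper closely: the identity $\Ud{\mu}=2^dd\,d\mu/d\Hdr$ ($\Hdr$-a.e.) via the rectifiable density theorem and Hinz's result is Lemma~\ref{lemma:UdRN}, the Fatou lower bound when $d\mu/d\Hdr\notin L^2(\Hdr)$ is Lemma~\ref{lemma:L2Argument}, the maximal-function domination plus dominated convergence is Lemma~\ref{lemma:MaxArgument} (resting on the uniform bound of Lemma~\ref{lemma:UpperAhlfors}), and your Cauchy--Schwarz argument for part 3 is in substance the paper's Hilbert-space argument. The genuine gap is exactly where you flag it: the claim that $\mu\nll\Hdr$ forces $\Id(\mu)=\infty$. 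The density comparison you invoke yields only $\limsup_{r\to0}\mu(B(x,r))/r^d=\infty$ for $\mu_{\mathrm{s}}$-a.e.\ $x$, and that is too weak. The normalization $(d-s)$ amounts to a logarithmic averaging over scales (this is precisely the content of the Bedford--Fisher order-two density connection recalled in the introduction), so a $\limsup$ attained only along a sparse sequence of radii is invisible in the limit: if, say, $\mu$ places mass $k\,r_k^d$ on shells at radii $r_k=\exp(-e^k)$ about $x$ over an absolutely continuous background, then $\limsup_{r\to0}\mu(B(x,r))/r^d=\infty$ at $x$, yet the spikes contribute $O\bigl(\sum_k (d-s)\,k\,r_k^{d-s}\bigr)\to0$ to $(d-s)U_s^\mu(x)$ as $s\uparrow d$, so the normalized potential stays bounded. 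Hence the pointwise implication ``$\limsup$ density infinite $\Rightarrow\liminf_{s\uparrow d}(d-s)U_s^\mu(x)=\infty$'' is false, no Fatou-type argument can be run from it, and part 1 (hence part 3, which uses the dichotomy to exclude non-absolutely-continuous competitors) is not proved.

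The step can be repaired within your real-variable framework, but it needs two ingredients you did not supply. First, Besicovitch differentiation applied to the pair $(\mu,\Hdr)$ gives the \emph{full} limit $\lim_{r\to0}\mu(B(x,r))/\Hdr(B(x,r))=\infty$ for $\mu_{\mathrm{s}}$-a.e.\ $x$ (apply the differentiation theorem to $D(\Hdr,\mu_{\mathrm{s}},\cdot)$, which vanishes $\mu_{\mathrm{s}}$-a.e.\ by mutual singularity). Second, the uniform upper bound $\Hdr(B(x,r))\le Cr^d$ for \emph{all} $x\in\Rp$ and $r>0$ --- this is Lemma~\ref{lemma:UpperAhlfors}, and it is where strong rectifiability enters --- converts this into $\lim_{r\to0}\mu(B(x,r))/r^d=\infty$ $\mu_{\mathrm{s}}$-a.e.; then your radial identity gives $\liminf_{s\uparrow d}(d-s)U_s^\mu(x)\ge dM$ for every $M$ at such $x$, and Fatou finishes. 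The paper takes a different route that avoids densities altogether for this step: Lemma~\ref{FlatEnergy} computes $\Id(\mu)=\omega_d\|\hat{\mu}\|_{2,\Ld}^2$ for measures on compact subsets of $\Rd$ and cites the fact that $\hat{\mu}\in L^2(\Ld)$ forces $\mu\ll\Ld$; the singular part is transported to $\Rd$ through a bi-Lipschitz chart using \eqref{eq:ImageMeasureEnergy} and \eqref{eq:LipschitzHausdorffBound}, while mass on the low-dimensional remainder $A\backslash\bigcup_i\varphi_i(K_i)$ is handled separately by the equality of Hausdorff and capacitory dimensions (Lemma~\ref{lemma:AbsContFiniteDEn}). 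A minor point in your favor: the repaired density route treats that remainder set for free, since any mass on it is automatically singular with respect to $\Hdr$, whereas the paper needs the extra capacity argument; but as written your proposal leaves the principal dichotomy unproved.
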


The second theorem asserts the weak-star convergence of the $s$-equilibrium measures to normalized Hausdorff measure as $s$ approaches $d$ from below. The essential idea behind the proof is that any weak-star limit point of the $s$-equilibrium measures, as $s$ approaches $d$, has normalized $d$-energy less then or equal to that of $\lambda^d$.

\begin{theorem}\label{th:Convergence}
Let $A\subset\Rp$ be compact and strongly $(\Hd, d)$-rectifiable such that $\Hd(A)>0$. Let $\lambda^d := \Hdr/\Hd(A)$. Then $\mu^s
\cws\lambda^d$ as $s\uparrow d$.
\end{theorem}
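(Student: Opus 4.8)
The plan is to combine the variational characterization of $\mu^s$ with the uniqueness of the minimizer of $\Id$ furnished by Theorem~\ref{th:Existence}. Since $A$ is compact metric, $C(A)$ is separable and $\MAp$ is weak-star compact and metrizable; hence it suffices to show that every weak-star limit point of $\{\mu^s\}$ as $s\uparrow d$ equals $\lambda^d$. So I would fix a sequence $s_n\uparrow d$ with $\mu^{s_n}\cws\mu^*$ for some $\mu^*\in\MAp$. Because $\lambda^d\ll\Hdr$ with constant density $1/\Hd(A)$, Theorem~\ref{th:Existence} gives $\Id(\lambda^d)=2^dd/\Hd(A)<\infty$, and by definition $(d-s)\Is(\lambda^d)\to\Id(\lambda^d)$ as $s\uparrow d$. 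The minimality of $\mu^s$, i.e.\ $\Is(\mu^s)\le\Is(\lambda^d)$, then yields the upper bound
\[
\limsup_{n\to\infty}(d-s_n)I_{s_n}(\mu^{s_n})\le\limsup_{n\to\infty}(d-s_n)I_{s_n}(\lambda^d)=\Id(\lambda^d).
\]

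The crux is a matching lower bound expressed through the limit measure, namely the lower-semicontinuity estimate
\[
\Id(\mu^*)\le\liminf_{n\to\infty}(d-s_n)I_{s_n}(\mu^{s_n}).
\]
I would isolate this as a lemma: if $s_n\uparrow d$ and $\nu_n\cws\nu$ in $\MAp$, then $\Id(\nu)\le\liminf_n(d-s_n)I_{s_n}(\nu_n)$. The natural first attack is to truncate the kernel: for $M>0$ the function $\min\{|x-y|^{-s},M\}$ is continuous and bounded on $A\times A$, so that $(d-s_n)I_{s_n}(\nu_n)\ge(d-s_n)\iint\min\{|x-y|^{-s_n},M\}\,d\nu_n\,d\nu_n$, and the product measures satisfy $\nu_n\times\nu_n\cws\nu\times\nu$ on $A\times A$.

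The main obstacle is that this naive truncation does not by itself recover $\Id(\nu)$: for each fixed $x\ne y$ one has $(d-s)|x-y|^{-s}\to0$, so the entire contribution to $\Id$ comes from concentration of the kernel on the diagonal, and a direct Fatou argument on $\nu\times\nu$ only produces the useless bound $0$. To capture the diagonal contribution I would pass to the local density structure provided by strong $(\Hd,d)$-rectifiability: cover $\Hd$-almost all of $A$, up to a set of dimension strictly less than $d$, by finitely many bi-Lipschitz images $\varphi_i(K_i)$ with constant $1+\varepsilon$, on which the Riesz interactions are comparable, up to a factor tending to $1$ as $\varepsilon\to0$, to those of Lebesgue measure on flat $d$-planes. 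On such flat model pieces the localized quantity $(d-s)U_s^{\nu_n}(x)$ converges, as in the proof of Theorem~\ref{th:Existence} and the order-two density computation, to a constant multiple of the density, and the functional $\rho\mapsto 2^dd\int\rho^2\,d\Hd$ appearing in Theorem~\ref{th:Existence} is convex with superlinear growth, hence weak-star lower semicontinuous; letting $\varepsilon\to0$ then removes the bi-Lipschitz distortion. This reduction to flat pieces, together with the convexity and superlinearity of the density-squared functional, is what should drive the lower-semicontinuity estimate, and I expect it to be the technically heaviest step of the argument.

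With both bounds in hand, $\Id(\mu^*)\le\Id(\lambda^d)$. Since $\mu^*\in\MAp$ and, by part~3 of Theorem~\ref{th:Existence}, $\lambda^d$ is the unique minimizer of $\Id$ over $\MAp$, it follows that $\mu^*=\lambda^d$. As $\mu^*$ was an arbitrary weak-star limit point of $\{\mu^s\}$ along a sequence $s_n\uparrow d$, and $\MAp$ is compact metrizable, the whole family must converge: $\mu^s\cws\lambda^d$ as $s\uparrow d$.
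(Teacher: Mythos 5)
Your outer skeleton coincides with the paper's: extract a weak-star cluster point $\psi$ of $\mu^{s}$ along $s_n\uparrow d$, bound $\limsup_n (d-s_n)I_{s_n}(\mu^{s_n})\le \Id(\lambda^d)$ by the minimality of $\mu^{s_n}$, establish the lower bound $\Id(\psi)\le\liminf_n (d-s_n)I_{s_n}(\mu^{s_n})$, and invoke the uniqueness of the minimizer from Theorem~\ref{th:Existence}. You also correctly diagnose that a truncation-plus-Fatou argument yields only the useless bound $0$. The problem is that the resolution you sketch for the lower bound --- which is where the entire content of the theorem lies --- does not work as stated. The claim that ``$(d-s)U_s^{\nu_n}(x)$ converges to a constant multiple of the density'' is a \emph{fixed-measure} statement (Hinz's result, used in Lemma~\ref{lemma:UdRN}): it concerns $\lim_{s\uparrow d}(d-s)U_s^{\mu}(x)$ for a single $\mu\ll\Hdr$. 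Along your sequence the exponent and the measure vary together, and nothing guarantees that the equilibrium measures $\mu^{s_n}$ are absolutely continuous with respect to $\Hdr$, so they need not have densities at all. For the same reason, weak-star lower semicontinuity of $\rho\mapsto 2^dd\int\rho^2\,d\Hdr$ cannot be brought to bear: it would have to be applied to densities of the $\mu^{s_n}$, which may not exist; indeed, a priori $\Id(\mu^{s_n})=\infty$ for every $n$ (by Lemma~\ref{lemma:AbsContFiniteDEn}, if some $\mu^{s_n}$ has a singular part) while $(d-s_n)I_{s_n}(\mu^{s_n})\le (d-s_n)I_{s_n}(\lambda^d)$ stays bounded. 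So no argument that evaluates the limit functional $\Id$, or densities, along the sequence can close the gap.

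What is actually needed is a comparison between two \emph{different} exponents that is uniform over all measures, and this is the missing engine. The paper supplies it as Lemma~\ref{OnePlusEtaSimple}: via the Fourier representation $I_s(\mu)=c(s,d)\int_{\Rd}|\xi|^{s-d}|\hat\mu(\xi)|^2\,d\Ld(\xi)$ and the monotonicity of $|\xi|^{s-d}$ in $s$ for $|\xi|\ge 1$, for every $\eta>0$ there is $s_0$ such that $(d-s)I_s(\mu)\le(1+\eta)\bigl[(d-t)I_t(\mu)+\eta\,\mu(\Rd)^2\bigr]$ whenever $s_0<s<t<d$. With this in hand one fixes $s$, uses the Principle of Descent (weak-star lower semicontinuity of $I_s$ at \emph{fixed} $s$) to get $(d-s)I_s(\psi)\le\liminf_n(d-s)I_s(\mu^{s_n})$, applies the comparison with $t=s_n$ to pass to $(d-s_n)I_{s_n}(\mu^{s_n})$, and only then lets $s\uparrow d$ and $\eta\to 0$. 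Your instinct to reduce to bi-Lipschitz images of flat pieces is indeed how the paper transports this estimate from $\Rd$ to a strongly $(\Hd,d)$-rectifiable set (Corollary~\ref{OnePlusEtaSingleEmbed} and Lemmas~\ref{SmallSubsetsHaveSmallEnergy}, \ref{DisjointPiecesPlusSmallEnergy}, \ref{OnePlusEtaEmbedded}), but what must be glued is this two-exponent inequality, not a semicontinuity property of $\Id$; moreover the gluing is itself nontrivial, since the cross-interactions between pieces, the overlap set, and the lower-dimensional exceptional set all have to be shown to contribute negligibly, which is precisely the work done in Lemmas~\ref{SmallSubsetsHaveSmallEnergy} and \ref{DisjointPiecesPlusSmallEnergy}.
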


The remainder of this paper is organized as follows. In Section~\ref{sec:Existence} we prove several lemmas leading to a proof of Theorem~\ref{th:Existence}. In Section~\ref{sec:Convergence} we show that $\mu^s$ converges to $\lambda^d$ first, for the simpler case where $A$ is a $d$-dimensional compact subset of $\Rd$. Then, by gluing together near isometries of compact subsets of $\Rd$, the theorem is proven for the more general case where $A$ is a strongly $(\Hd, d)$-rectifiable subset of $\Rp$.

\section{The Existence of a Unique Minimizer of $\Id$}\label{sec:Existence}

In this paper, the Fourier transform of a finite Borel measure $\mu$ supported on $\Rd$ is defined by
$$ \Rd \ni \xi \to \hat{\mu}(\xi):=\int_{\Rd} e^{-2\pi ix\cdot\xi}d\mu(x).
$$  For a compactly supported Radon measure $\mu$ on $\Rd$ and $s\in(0,d)$ the
Riesz $s$-energy of $\mu$ may be expressed as
(cf. \cite{Landkof1,Mattila1,Wolff1})
$$
I_{s}(\mu)=c(s,d)\int_{\Rd}|\xi|^{s-d}|\hat{\mu}(\xi)|^{2}d\Ld(\xi),$$
 where  $ \Ld $ denotes  Lebesgue measure  on
$\Rd$ and the constant $c(s,d)$ is given by \[
c(s,d) = \pi^{s-\frac{d}{2}}\frac{\Gamma(\frac{d-s}{2})}{\Gamma(\frac{s}{2})}.\]
Observe that (cf. \cite[ch. 1]{Landkof1}) \begin{equation}
\lim_{s\uparrow d}\ (d-s)c(s,d)=\omega_{d},\label{eq:DminusStimesCSD}\end{equation}
 where $\omega_{d}$ is the surface area of the $d-1$ sphere in $\Rd$.

\begin{lemma}\label{FlatEnergy}Let $K\subset\Rd$ be compact. 
For a measure $\mu\in{\mathcal M}(K)$ we have
$$
\Id(\mu)=\omega_{d}\|\hat{\mu}\|_{2,\Ld}^{2}.
$$ Further, if $\Id(\mu)<\infty$, then $\mu\ll\Ld$.

\end{lemma}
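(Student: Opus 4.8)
The plan is to use the Fourier representation $I_s(\mu)=c(s,d)\int_{\Rd}|\xi|^{s-d}|\hat\mu(\xi)|^2\,d\Ld(\xi)$ stated above, multiply by $(d-s)$, and pass to the limit as $s\uparrow d$. Since $\Id(\mu):=\lim_{s\uparrow d}(d-s)I_s(\mu)$, the goal reduces to showing
\[
\lim_{s\uparrow d}(d-s)c(s,d)\int_{\Rd}|\xi|^{s-d}|\hat\mu(\xi)|^2\,d\Ld(\xi)=\omega_d\int_{\Rd}|\hat\mu(\xi)|^2\,d\Ld(\xi).
\]
The prefactor is handled by \eqref{eq:DminusStimesCSD}, which gives $(d-s)c(s,d)\to\omega_d$, so the work is entirely in the integral: I would show that $\int|\xi|^{s-d}|\hat\mu|^2\,d\Ld\to\int|\hat\mu|^2\,d\Ld$ as $s\uparrow d$, where the right-hand side is exactly $\|\hat\mu\|_{2,\Ld}^2$.

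\emph{First}, I would record the pointwise behavior: for fixed $\xi\ne0$, $|\xi|^{s-d}\to|\xi|^0=1$ as $s\uparrow d$, so the integrand converges pointwise to $|\hat\mu(\xi)|^2$. \emph{Next}, I would split the integral over the regions $|\xi|\le1$ and $|\xi|>1$ and treat them separately, since $|\xi|^{s-d}$ behaves oppositely on the two regions (on $|\xi|>1$ the factor $|\xi|^{s-d}\le1$ is bounded and increasing to $1$, so monotone convergence applies cleanly; on $|\xi|\le1$ the factor $|\xi|^{s-d}\ge1$ blows up near the origin and must be controlled). The outer region is the easy half: there $|\xi|^{s-d}|\hat\mu(\xi)|^2\le|\hat\mu(\xi)|^2$, and monotone (or dominated) convergence yields the limit provided $\hat\mu\in L^2(\{|\xi|>1\})$, which holds automatically for any finite measure since $\hat\mu$ is bounded by the total mass $\mu(\Rd)$ — but I must be careful, because $|\hat\mu|^2$ need not be integrable a priori.

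\emph{The main obstacle} is exactly this integrability issue: a finite measure $\mu$ need not have $\hat\mu\in L^2(\Rd)$, and the region $|\xi|\le1$ is delicate because $|\xi|^{s-d}$ is unbounded near the origin. The cleanest route is to observe that $(d-s)I_s(\mu)\ge0$ is monotone in an appropriate sense and invoke the monotone convergence theorem directly on the full product $(d-s)c(s,d)|\xi|^{s-d}$, after verifying that for $s$ close to $d$ the family $|\xi|^{s-d}$ is monotone increasing to $1$ on $|\xi|>1$ and that any contribution from $|\xi|\le1$ is dominated. I would argue that if $\Id(\mu)<\infty$ then Fatou's lemma applied to the pointwise limit forces $\int|\hat\mu|^2\,d\Ld<\infty$, i.e. $\hat\mu\in L^2(\Rd)$; by Plancherel this is equivalent to $\mu$ being (the Fourier transform of) an $L^2$ function, which by a standard absolute-continuity argument yields $\mu\ll\Ld$. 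Conversely, once $\hat\mu\in L^2$ is known, dominated convergence on $\{|\xi|>1\}$ together with a uniform estimate showing the $\{|\xi|\le1\}$ contribution converges to $\int_{|\xi|\le1}|\hat\mu|^2$ completes the identity $\Id(\mu)=\omega_d\|\hat\mu\|_{2,\Ld}^2$.

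To make the $\{|\xi|\le1\}$ region rigorous I would exploit that $\hat\mu$ is continuous (hence bounded on the unit ball) so that the only issue is the singularity of $|\xi|^{s-d}$ at the origin; here I would use that $\int_{|\xi|\le1}|\xi|^{s-d}\,d\Ld(\xi)=\omega_d/s$ stays bounded as $s\uparrow d$, and write $|\hat\mu(\xi)|^2=|\hat\mu(0)|^2+(|\hat\mu(\xi)|^2-|\hat\mu(0)|^2)$, controlling the oscillatory remainder by continuity of $\hat\mu$ at $0$ and the bounded integral of the weight. The conclusion $\mu\ll\Ld$ then follows: $\hat\mu\in L^2$ gives $\mu=g\,\Ld$ for some $g\in L^2$ via Plancherel/Parseval, and since $\mu$ is also a finite measure supported on the compact set $K$, the Radon–Nikodym density exists and $\mu$ is absolutely continuous with respect to $\Ld$.
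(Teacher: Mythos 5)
Your proof is correct and takes essentially the same route as the paper: the Fourier representation of $I_s(\mu)$, the split of the integral at $|\xi|=1$ with monotone convergence on $\{|\xi|>1\}$ (which handles the possibly infinite case) and a domination/continuity argument on the unit ball (the paper just invokes dominated convergence there, using that $\hat{\mu}$ is bounded and $|\xi|^{s_1-d}$ is integrable near the origin), together with the limit $(d-s)c(s,d)\to\omega_d$. The absolute-continuity claim is also handled identically in substance: the paper cites the established result (from Wolff) that $\hat{\mu}\in L^2(\Ld)$ implies $\mu\ll\Ld$ with $L^2$ density, which is precisely the Plancherel argument you sketch.
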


\begin{proof}For any measure $\mu\in{\mathcal M}(K)$ the Riesz $s$-energy can be expressed as\[
\Is(\mu)=c(s,d)\int_{|\xi|\leq1}|\xi|^{s-d}|\hat{\mu}(\xi)|^{2}d\Ld(\xi)+c(s,d)\int_{|\xi|>1}|\xi|^{s-d}|\hat{\mu}(\xi)|^{2}d\Ld(\xi).\]
By dominated convergence\[ \lim_{s\uparrow d}\
\int_{|\xi|\leq1}|\xi|^{s-d}|\hat{\mu}(\xi)|^{2}d\Ld(\xi)=\int_{|\xi|\leq1}|\hat{\mu}(\xi)|^{2}d\Ld(\xi),\]
and by monotone convergence\[ \lim_{s\uparrow d}\
\int_{|\xi|>1}|\xi|^{s-d}|\hat{\mu}(\xi)|^{2}d\Ld(\xi)=\int_{|\xi|>1}|\hat{\mu}(\xi)|^{2}d\Ld(\xi).\]
From \eqref{eq:DminusStimesCSD} the first statement is proven.

An established result (cf. \cite{Wolff1}) is that, if $\hat{\mu}\in L^{2}(\Ld)$,
then $\mu\ll\Ld$ and $d\mu/d\Ld\in L^{2}(\Ld)$.
\end{proof}

\begin{definition}(cf. \cite[ch. 1]{Mattila1}) Let $\mu$ be a
compactly supported Radon measure on $\Rp$ and let
$\varphi~:~\supp\{\mu\}~\rightarrow~\Rpp$ be continuous. The
\emph{image measure} associated with $\mu$ and $\varphi$ is the
set-valued function $\varphi_{\#}\mu$ defined by
$$
\varphi_{\#}\mu(E):=\mu(\varphi^{-1}(E)).
$$
 \end{definition} The following are straightforward consequences
of the above definition.

\begin{itemize}
\item [1.] $\varphi_{\#}\mu$, as defined above, is a compactly supported
Radon measure on $\Rpp$. 
\item [2.] For a non-negative $\varphi_{\#}\mu$-measurable function
$f$ 
$$
\int fd\varphi_{\#}\mu=\int f(\varphi)d\mu.
$$
\end{itemize}
For $A\subset\Rp$,  a bi-Lipschitz map $\varphi:A\rightarrow \Rpp$ with constant $L$, and 
a measure  $\mu\in\MA$  it
  follows   that 
\begin{equation}
\label{eq:ImageMeasureEnergy}
\frac{1}{L^{s}}\Is(\varphi_{\#}\mu)\leq\Is(\mu)\leq L^{s}\Is(\varphi_{\#}\mu),
\end{equation}
and 
\begin{equation}
\label{eq:LipschitzHausdorffBound}
\frac{1}{L^d}\Hd(\varphi(A)) \leq \Hd(A) \leq L^d \Hd(\varphi(A)).
\end{equation}
Note \eqref{eq:LipschitzHausdorffBound} implies $\mu\perp\Hd$ if  and only if  $\varphi_{\#}\mu\perp\Hd$.

\begin{lemma}
\label{lemma:AbsContFiniteDEn}
Let $A\subset \Rp$ be strongly $(\Hd, d)$-rectifiable and let $\mu\in\MA$ be such that $\mu\nll\Hdr$, then $\Id(\mu)$ exists and is infinite.
\end{lemma}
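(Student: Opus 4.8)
The plan is to localize $\mu$ by restriction and then transport the relevant pieces to the flat setting of Lemma~\ref{FlatEnergy} through the bi-Lipschitz charts of Definition~\ref{df:strongrect}, handling separately the exceptional set that these charts fail to cover. The basic mechanism is monotonicity of energy under restriction: since the kernel $|x-y|^{-s}$ is nonnegative, $0\le\eta\le\nu$ implies $\Is(\eta)\le\Is(\nu)$, and hence $(d-s)\Is(\eta)\le(d-s)\Is(\nu)$ for $s\in(0,d)$. Thus to prove $\Id(\mu)=\infty$ it suffices to produce a restriction of $\mu$ whose normalized $d$-energy is infinite; and whenever $(d-s)\Is$ of such a restriction tends to $+\infty$, the same holds for $\mu$, so that $\Id(\mu)$ exists as an extended real and equals $\infty$.

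First I would isolate the singular mass. Since $\mu\nll\Hdr$ there is a Borel set $E$ with $\Hd(E\cap A)=0$ and $\mu(E)>0$; intersecting with $A$ we may take $E\subset A$ with $\Hd(E)=0$. Set $\nu:=\mu_E$. Then $0\le\nu\le\mu$, $\nu(A)>0$, and $\nu$ is concentrated on the $\Hd$-null set $E$, so $\nu\perp\Hd$. It remains to show $\Id(\nu)=\infty$. Applying Definition~\ref{df:strongrect} with $\varepsilon=1$ yields compact $K_1,\dots,K_N\subset\Rd$ and bi-Lipschitz maps $\varphi_i:K_i\to\Rp$ with constants $L_i<2$ such that $A_0:=A\setminus\bigcup_{i=1}^N\varphi_i(K_i)$ has $\dim A_0<d$. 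Writing $A_i:=\varphi_i(K_i)$ we have $A\subset A_0\cup\bigcup_i A_i$, so by subadditivity at least one of $\nu(A_0),\nu(A_1),\dots,\nu(A_N)$ is positive.

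If $\nu(A_0)>0$, I would pick (by inner regularity) a compact $F\subset A_0$ with $\nu(F)>0$, so that $\dim F\le\dim A_0<d$. For any fixed $s\in(\dim A_0,d)$ one has $\mathcal{H}^s(F)=0$, and it is classical (cf.~\cite{Mattila1}) that every nonzero measure supported on a set of finite $\mathcal{H}^s$-measure has infinite $s$-energy; hence $\Is(\nu)\ge\Is(\nu_F)=\infty$ for all such $s$, whence $(d-s)\Is(\nu)=\infty$ and $\Id(\nu)=\infty$. If instead $\nu(A_i)>0$ for some $i\ge1$, I would set $\eta:=\nu_{A_i}$, a nonzero measure supported in the compact set $A_i$ and still concentrated on $E$, so that $\eta\perp\Hd$. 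Pushing forward by $\psi:=\varphi_i^{-1}:A_i\to K_i$ (bi-Lipschitz with constant $L_i$) gives $\theta:=\psi_{\#}\eta\in\mathcal{M}(K_i)$, which is nonzero with $\theta\perp\Hd$ by the remark following~\eqref{eq:LipschitzHausdorffBound}; since on $\Rd$ Hausdorff and Lebesgue measure are mutually absolutely continuous, $\theta$ is a nonzero measure singular to $\Ld$. Lemma~\ref{FlatEnergy} then gives $\Id(\theta)=\omega_d\|\hat{\theta}\|_{2,\Ld}^2$, and this must be infinite, for a finite value would force $\theta\ll\Ld$. Finally~\eqref{eq:ImageMeasureEnergy} applied to $\psi$ yields $\Is(\eta)\ge L_i^{-s}\Is(\theta)$, so $(d-s)\Is(\eta)\ge L_i^{-s}(d-s)\Is(\theta)\to\infty$ as $s\uparrow d$ (using $L_i^{-s}\to L_i^{-d}>0$); by monotonicity $\Id(\nu)=\infty$, and therefore $\Id(\mu)=\infty$.

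The main obstacle is the exceptional set $A_0$: this is exactly where strong rectifiability is needed rather than mere $(\Hd,d)$-rectifiability. The singular measure $\nu$ may place all of its mass on the part of $A$ not covered by any chart, and on that part there is no bi-Lipschitz model in $\Rd$ to which Lemma~\ref{FlatEnergy} could be applied. Condition~3 of Definition~\ref{df:strongrect}, namely $\dim A_0<d$, is what rescues the argument, since it lets the capacity/Hausdorff comparison force infinite $s$-energy for every $s$ slightly below $d$. A secondary point requiring care is the faithful transfer of singularity through the charts---ensuring that restricting to a single $A_i$ preserves both the support condition and the relation $\eta\perp\Hd$---but this is supplied cleanly by~\eqref{eq:LipschitzHausdorffBound} and its stated consequence.
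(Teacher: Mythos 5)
Your proof is correct and follows essentially the same route as the paper's: decompose $A$ via the strong-rectifiability charts with bi-Lipschitz constant $<2$, dispose of mass on the lower-dimensional leftover set by the Hausdorff/capacitary dimension comparison, and transport any singular mass on a chart to $\Rd$ via $\varphi_i^{-1}$, where Lemma~\ref{FlatEnergy} together with \eqref{eq:ImageMeasureEnergy} forces the normalized $d$-energy to be infinite. The only cosmetic difference is that you work with the restriction of $\mu$ to an $\Hd$-null set of positive $\mu$-measure rather than with the singular part $\mu^{\perp}$ from the Lebesgue decomposition used in the paper.
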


\begin{proof}
Let $\mu\in\MA$ such that $\mu\nll\Hdr$. Let $\mu = \mu^{\perp} + \mu^{\ll}$ be the Lebesgue decomposition of $\mu$ with 
respect to $\Hdr$. Let $K_1,\dots,K_N$ and $\varphi_1:K_1\to\Rp,\dots,\varphi_N:K_N\to\Rp$ be the compact subsets of $\Rd$ 
and the corresponding maps with bi-Lipschitz constant less than $2$ provided by the strong $(\Hd, d)$-rectifiability of $A$. Let $B = 
A \backslash \bigcup_{i=1}^N\varphi_i(K_i)$ and $s_0=\dim B$. If $\mu(B)>0$, then, by the equality of the capacitory and 
Hausdorff dimensions (cf.~\cite{Mattila1}), $I_s(\mu)=\infty$ for all $s\in(s_0,d)$. Hence $\Id(\mu)=\infty$. 

If $\mu(B)=0$, then 
$$
0<\mu^{\perp}(A)\leq\sum_{i=1}^N\mu^{\perp}(\varphi_i(K_i)).
$$
Choose $j\in 1,\ldots, N$ such that $\mu^{\perp}(\varphi_j(K_j))>0$, and define $\nu_j := \mu^{\perp}_{\varphi_j(K_j)}$. Since $
\nu_j \perp \Hd_{\varphi_j(K_j)}$, it follows that $\varphi_{j\#}^{-1} \nu_j \perp \Hd$ and hence $\varphi_{j\#}^{-1} \nu_j \perp \Ld$. 
By Lemma~\ref{FlatEnergy} we have that $\Id(\varphi_{j\#}^{-1} \nu_j)=\infty$ and by \eqref{eq:ImageMeasureEnergy} it follows 
that $\infty=\Id(\varphi_{j\#}\varphi_{j\#}^{-1} \nu_j) = \Id(\nu_j) \leq \Id(\mu)$.
\end{proof}

\begin{lemma}
\label{lemma:UdRN}
Let $A\subset \Rp$ be strongly $(\Hd, d)$-rectifiable and $\mu\in\MA$ such that $\mu\ll\Hdr$. Then 
$$
\Ud{\mu} = 2^dd\frac{d\mu}{d\Hdr} \qquad \text{$\Hdr$-a.e.}
$$
\end{lemma}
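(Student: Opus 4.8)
The plan is to recognize $\Ud{\mu}(x)$ as a constant multiple of the pointwise $d$-density of $\mu$ at $x$ and then to evaluate that density from the rectifiability of $A$. Write $g := d\mu/d\Hdr$, so that $\mu = g\,\Hdr$ with $g\in L^1(\Hdr)$. By the result of Hinz recalled in the introduction, at any point $x$ at which $\mu(B(x,r)) < C(x)\,r^d$ for all $r>0$ and at which the density $D_\mu(x) = \lim_{r\to0}\mu(B(x,r))/r^d$ exists, one has $\Ud{\mu}(x) = d\,D_\mu(x)$. It therefore suffices to prove that, for $\Hdr$-almost every $x$, the density $D_\mu(x)$ exists and equals $2^d g(x)$, and that the growth bound holds; the desired identity $\Ud{\mu}(x) = 2^d d\, g(x)$ then follows at once, including at the $\Hdr$-a.e. point where $g(x)=0$ (where both sides vanish).

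For the density computation I would first note that, since $A$ is strongly $(\Hd,d)$-rectifiable, it is $(\Hd,d)$-rectifiable and $\Hd(A)<\infty$: each $\varphi_i(K_i)$ has finite $\Hd$-measure because $K_i$ is compact and $\varphi_i$ is Lipschitz, while the uncovered remainder has dimension strictly less than $d$ and is thus $\Hd$-null. For a $(\Hd,d)$-rectifiable set of finite measure the classical density theorem for rectifiable sets asserts that the $d$-density of $\Hdr$ exists and equals $1$ (in the standard normalization) at $\Hd$-almost every $x\in A$; under the present normalization $\Hd(B(0,1)) = 2^d$ on $\Rd$, where a $d$-ball of radius $r$ has measure $(2r)^d$, this reads $\lim_{r\to0}\Hd(B(x,r)\cap A)/r^d = 2^d$. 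At $\Hdr$-almost every $x$ the point is also a Lebesgue point of $g$ with respect to $\Hdr$, that is, $\frac{1}{\Hd(B(x,r)\cap A)}\int_{B(x,r)\cap A}|g-g(x)|\,d\Hdr \to 0$. Combining these two facts gives $\mu(B(x,r)) = \int_{B(x,r)\cap A} g\,d\Hdr \sim g(x)\,\Hd(B(x,r)\cap A) \sim 2^d g(x)\,r^d$ as $r\to0$, so $D_\mu(x)$ exists and equals $2^d g(x)$.

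Finally, the growth bound needed to invoke Hinz follows from the finiteness of $D_\mu(x)$: at a point where $D_\mu(x)=2^d g(x)$ exists there is $r_0>0$ with $\mu(B(x,r)) \le (2^d g(x)+1)\,r^d$ for $r\le r_0$, while for $r\ge r_0$ we have $\mu(B(x,r)) \le \mu(\Rp) \le (\mu(\Rp)/r_0^d)\,r^d$; choosing $C(x)$ to be the maximum of these two constants yields $\mu(B(x,r)) < C(x)\,r^d$ for all $r>0$. I expect the principal obstacle to be the density computation of the second step: one must apply the density theorem in exactly the form valid for $(\Hd,d)$-rectifiable sets, where it is the rectifiability, and not mere $\Hd$-measurability, that forces the $d$-density of $\Hdr$ to exist and equal $1$, and one must keep careful track of the normalization $\Hd(B(0,1))=2^d$ so that the factor $2^d$ emerges correctly in $D_\mu(x) = 2^d g(x)$ and hence in $\Ud{\mu} = 2^d d\,g$.
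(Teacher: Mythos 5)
Your proof is correct and follows essentially the same route as the paper's: both reduce the statement, via Hinz's result, to showing that the density $D_\mu(x)$ exists and equals $2^d\,\frac{d\mu}{d\Hdr}(x)$ at $\Hdr$-a.e.\ $x$ together with the growth bound $\sup_{r>0}\mu(B(x,r))/r^d<\infty$, and both obtain this by combining the density theorem for $(\Hd,d)$-rectifiable sets with differentiation of $\mu$ with respect to $\Hdr$. The only cosmetic difference is that you phrase the differentiation step through Lebesgue points of $d\mu/d\Hdr$ and spell out the large-$r$/small-$r$ growth bound explicitly, where the paper cites the Radon--Nikod\'ym differentiation statement directly.
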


\begin{proof}
As already noted, because $A$ is strongly $(\Hd, d)$-rectifiable, it is $(\Hd,d)$-rectifiable.  For any $(\Hd,d)$-rectifiable set, a density result (cf~\cite[ch. 16]{Mattila1}) and the Radon-Nikod\'ym Theorem give
$$
\lim_{r\downarrow 0} \frac{\Hdr(B(x,r))}{(2r)^d} = 1 \qquad \text{and} \qquad \lim_{r\downarrow 0} \frac{\mu(B(x,r))}{\Hdr(B(x,r))} =\left.\frac{d\mu}{d\Hdr}\right|_x<\infty\qquad \text{for $\Hdr$-a.e. $x$.}
$$ 
For $\Hdr$-a.e. $x$ we then have $\sup_{r>0} \mu(B(x,r))/r^d < \infty $ and
$$
D_\mu(x) = 2^d\left.\frac{d\mu}{d\Hdr}\right|_x.
$$
Hence the order-two density exists and, by the result of Hinz in~\cite{Hinz1} mentioned earlier, $$\Ud{\mu}(x) =  2^dd\left.\frac{d\mu}{d\Hdr}\right|_x \qquad \text{$\Hdr$-a.e.}$$

\end{proof}

\begin{lemma}
\label{lemma:L2Argument}
Let $A\subset \Rp$ be strongly $(\Hd, d)$-rectifiable and let $\mu\in\MA$ be such that $\mu\ll\Hdr$ and $d\mu/d\Hdr \notin L^2(\Hdr)$, then $\Id(\mu)$ exists and is infinite.
\end{lemma}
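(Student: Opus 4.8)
The plan is to bound $\Id(\mu)$ from below by $\int\Ud{\mu}\,d\mu$ through a Fatou-type argument, and then to recognize this integral, via Lemma~\ref{lemma:UdRN}, as $2^dd\int(d\mu/d\Hdr)^2\,d\Hdr$, which is infinite by hypothesis. The key fact that makes the argument clean is that, because $\mu\ll\Hdr$, Lemma~\ref{lemma:UdRN} guarantees the pointwise limit $\Ud{\mu}(x)=\lim_{s\uparrow d}(d-s)U_s^{\mu}(x)$ genuinely exists for $\mu$-a.e.\ $x$ (not merely as a $\liminf$) and equals $2^dd\,(d\mu/d\Hdr)|_x$.

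First I would fix $s\in(0,d)$ and use Tonelli's theorem together with the nonnegativity of the Riesz kernel to write $(d-s)\Is(\mu)=\int (d-s)U_s^{\mu}(x)\,d\mu(x)$, so that the integrands $f_s(x):=(d-s)U_s^{\mu}(x)$ are nonnegative. Since the parameter $s$ is continuous, I would pass to sequences: given any $s_n\uparrow d$, Lemma~\ref{lemma:UdRN} yields $f_{s_n}(x)\to\Ud{\mu}(x)$ for $\mu$-a.e.\ $x$, so the sequential Fatou lemma gives $\int\Ud{\mu}\,d\mu\leq\liminf_n\int f_{s_n}\,d\mu$. Choosing the sequence that realizes $\liminf_{s\uparrow d}(d-s)\Is(\mu)$ then produces the bound
$$
\liminf_{s\uparrow d}(d-s)\Is(\mu)\;\geq\;\int\Ud{\mu}\,d\mu.
$$

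Finally I would substitute $\Ud{\mu}=2^dd\,d\mu/d\Hdr$ and $d\mu=(d\mu/d\Hdr)\,d\Hdr$ to obtain $\int\Ud{\mu}\,d\mu=2^dd\int(d\mu/d\Hdr)^2\,d\Hdr$, which equals $\infty$ precisely because $d\mu/d\Hdr\notin L^2(\Hdr)$. Consequently $\liminf_{s\uparrow d}(d-s)\Is(\mu)=\infty$, forcing the defining limit to exist as an extended real number and equal $+\infty$, so $\Id(\mu)=\infty$. The main obstacle, and the only point requiring care, is the transfer of Fatou's lemma from a discrete sequence to the continuous limit $s\uparrow d$; this is handled smoothly because the integrands converge (rather than merely $\liminf$-dominate) to $\Ud{\mu}$ by Lemma~\ref{lemma:UdRN}, so every sequence $s_n\uparrow d$ gives the same pointwise limit $\Ud{\mu}$ and the sequential bound is uniform across the choice of sequence.
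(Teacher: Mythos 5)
Your proposal is correct and follows essentially the same route as the paper: both combine Lemma~\ref{lemma:UdRN} (valid $\mu$-a.e.\ since $\mu\ll\Hdr$) with Fatou's Lemma to bound $\liminf_{s\uparrow d}(d-s)\Is(\mu)$ from below by $\int\Ud{\mu}\,d\mu=2^dd\int(d\mu/d\Hdr)^2\,d\Hdr=\infty$. The only difference is that you spell out the reduction of the continuous-parameter Fatou argument to sequences, a detail the paper's one-line proof leaves implicit.
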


\begin{proof}
From Lemma~\ref{lemma:UdRN} and Fatou's Lemma we immediately obtain
\begin{eqnarray*}
\infty &=& 2^dd \int \left(\frac{d\mu}{d\Hdr}\right)^2 d\Hdr = \int \left(2^dd\frac{d\mu}{d\Hdr} \right)d\mu = \int\Ud{\mu}d\mu\\
&=&\int\left(\lim_{s\uparrow d} (d-s)\int\Rk{x}{y}{s}d\mu(y)\right)d\mu(x) \leq \liminf_{s\uparrow d} (d-s)\iint\Rk{x}{y}{s}d\mu(y)d\mu(x).
\end{eqnarray*}
\end{proof}

\begin{lemma}
\label{lemma:UpperAhlfors}
Let $A\subset \Rp$ be strongly $(\Hd, d)$-rectifiable. There is a constant $C$ depending only on $A$ such that for all $x\in\Rp$ and all 
$r>0$
$$
\frac{\Hdr(B(x,r))}{r^d} < C.
$$
\end{lemma}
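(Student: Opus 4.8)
The plan is to exploit the finite bi-Lipschitz decomposition guaranteed by strong rectifiability to reduce the desired uniform upper bound to the elementary scaling behaviour of $\Hd$ on balls in $\Rd$. I would fix $\varepsilon=1$ in Definition~\ref{df:strongrect} and let $K_1,\dots,K_N\subset\Rd$ and $\varphi_i:K_i\to\Rp$ be the resulting finite collection of compact sets and bi-Lipschitz maps, each with constant $L<2$, and set $B:=A\backslash\bigcup_{i=1}^N\varphi_i(K_i)$. Since $\dim B<d$ we have $\Hd(B)=0$, so by subadditivity, for every $x\in\Rp$ and $r>0$,
$$
\Hdr(B(x,r))=\Hd\bigl(A\cap B(x,r)\bigr)\le\sum_{i=1}^N\Hd\bigl(\varphi_i(K_i)\cap B(x,r)\bigr).
$$
Note that the essential disjointness of the images (condition 2) plays no role here; plain subadditivity suffices.

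Second, I would bound each summand by a constant multiple of $r^d$ independent of $x$ and $r$. Fix $i$ and set $S_i:=\varphi_i^{-1}\bigl(\varphi_i(K_i)\cap B(x,r)\bigr)\subset K_i$. If $u,v\in S_i$ then $\varphi_i(u),\varphi_i(v)$ both lie in $B(x,r)$, so $|\varphi_i(u)-\varphi_i(v)|\le 2r$, and the lower bi-Lipschitz estimate $\tfrac1L|u-v|<|\varphi_i(u)-\varphi_i(v)|$ forces $\diam S_i\le 2Lr$. Hence $S_i$ is contained in a closed ball of radius $2Lr$ in $\Rd$, and by the chosen normalization $\Hd(B(0,1))=2^d$ together with the scaling $\Hd(\rho E)=\rho^d\Hd(E)$ we get $\Hd(S_i)\le 2^d(2Lr)^d=(4L)^d r^d$. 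Since $\varphi_i$ is injective, $\varphi_i(S_i)=\varphi_i(K_i)\cap B(x,r)$, and applying the upper Lipschitz bound for Hausdorff measure behind \eqref{eq:LipschitzHausdorffBound}, namely $\Hd(\varphi_i(S_i))\le L^d\Hd(S_i)$, yields
$$
\Hd\bigl(\varphi_i(K_i)\cap B(x,r)\bigr)\le L^d(4L)^d r^d=(4L^2)^d r^d.
$$

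Summing over $i$ then gives $\Hdr(B(x,r))\le N(4L^2)^d r^d$, so $C:=N(4L^2)^d$ works and depends only on $A$ through the fixed decomposition. The one point that genuinely requires strong rectifiability, rather than mere $(\Hd,d)$-rectifiability, is the finiteness of $N$: each piece contributes the same uniform bound $(4L^2)^d r^d$, so only a finite sum produces a finite constant $C$, whereas a countable cover would leave the sum divergent. This is the crux of the matter, and I expect no serious obstacle beyond the routine pull-back/push-forward bookkeeping above, provided one keeps the bi-Lipschitz constant and the $\Hd$-normalization straight.
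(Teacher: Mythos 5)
Your proof is correct and follows essentially the same route as the paper's: fix $\varepsilon=1$ in the definition of strong rectifiability, use $\Hd\bigl(A\backslash\bigcup_i\varphi_i(K_i)\bigr)=0$ plus subadditivity, pull each intersection $\varphi_i(K_i)\cap B(x,r)$ back to $\Rd$, bound the preimage's diameter by $2Lr$ via the lower bi-Lipschitz inequality, and push forward with \eqref{eq:LipschitzHausdorffBound}. The only cosmetic difference is that the paper bounds the preimage's measure directly by its diameter to the power $d$, i.e.\ by $(4r)^d$, giving $C=2^{3d}N$, whereas you enclose the preimage in a ball of radius $2Lr$, giving the slightly larger but equally valid constant $C=N(4L^2)^d$.
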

\begin{proof}
Let $K_1,\dots,K_N$ and $\varphi_1:K_1\to\Rp,\dots,\varphi_N:K_N\to\Rp$ be the compact subsets of $\Rd$ and the 
corresponding maps with bi-Lipschitz constant less than $2$ provided by the strong $(\Hd, d)$-rectifiability of $A$. Since $\Hd(A) = 
\Hd(\bigcup_{i=1}^N\varphi_i(K_i))$ and since each $\varphi_i$ is bijective, we have
$$
\frac{\Hdr(B(x,r))}{r^d} \leq \sum_{i=1}^N \frac{\Hd(\varphi_i(K_i)\cap B(x,r))}{r^d}=
\sum_{i=1}^N \frac{\Hd(\varphi_i(K_i\cap \varphi_i^{-1}(B(x,r))))}{r^d} \leq 
\sum_{i=1}^N \frac{2^d\Hd(K_i\cap \varphi_i^{-1}(B(x,r)))}{r^d},
$$
where the last inequality follows from \eqref{eq:LipschitzHausdorffBound}. Since $\Hd(K_i\cap \varphi_i^{-1}(B(x,r)))\leq2^{2d}r^d
$, the claim holds with $C=2^{3d}N$.
\end{proof}

\begin{lemma}
\label{lemma:MaxArgument}
Let $A\subset \Rp$ be strongly $(\Hd, d)$-rectifiable and $\mu\in\MA$ be such that $\mu\ll\Hdr$ and $d\mu/d\Hdr \in L^2(\Hdr)$, then $\Id(\mu)$ exists and
$$\Id(\mu)=\int\Ud{\mu}d\mu.$$
\end{lemma}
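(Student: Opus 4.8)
The plan is to pass the limit $s\uparrow d$ inside the energy integral by dominated convergence. Write $g := d\mu/d\Hdr \in L^2(\Hdr)$. By Lemma~\ref{lemma:UdRN}, $\Ud{\mu}(x) = \lim_{s\uparrow d}(d-s)U_s^{\mu}(x)$ exists and equals $2^d d\, g(x)$ for $\Hdr$-a.e.\ $x$, hence for $\mu$-a.e.\ $x$ since $\mu\ll\Hdr$; moreover $\int\Ud{\mu}\,d\mu = 2^d d\int g^2\,d\Hdr<\infty$. Because $(d-s)\Is(\mu)=\int (d-s)U_s^{\mu}\,d\mu$ and the integrand converges pointwise $\mu$-a.e.\ to $\Ud{\mu}$, it suffices to produce a $\mu$-integrable majorant of $(d-s)U_s^{\mu}$, uniform for $s$ in a left-neighborhood of $d$; dominated convergence then gives both the existence of the limit and the identity $\Id(\mu)=\int\Ud{\mu}\,d\mu$. (The matching lower bound $\int\Ud{\mu}\,d\mu\le\liminf_{s\uparrow d}(d-s)\Is(\mu)$ is already the Fatou computation from the proof of Lemma~\ref{lemma:L2Argument}.)

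To build the majorant I would use the maximal density $M\mu(x) := \sup_{r>0}\mu(B(x,r))/r^d$. Starting from the layer-cake identity $U_s^{\mu}(x)=\int_0^\infty s\,t^{-s-1}\mu(B(x,t)^0)\,dt$, splitting the integral at $R:=\diam(A)$, bounding $\mu(B(x,t))\le M\mu(x)\,t^d$ on $(0,R)$ and using $\mu(B(x,t)^0)=\mu(A)$ on $(R,\infty)$, a direct computation gives
$$
(d-s)U_s^{\mu}(x)\le s\,R^{d-s}M\mu(x)+(d-s)\,\mu(A)\,R^{-s}.
$$
Thus, uniformly for $s$ in a fixed left-neighborhood of $d$, one has $(d-s)U_s^{\mu}(x)\le C_1 M\mu(x)+C_2$ with $C_1,C_2$ depending only on $A$ and $\mu(A)$. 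It then remains to verify that $M\mu\in L^1(\mu)$.

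For this I would compare $M\mu$ with the centered Hardy--Littlewood maximal function $\mathcal{M}g(x):=\sup_{r>0}\Hdr(B(x,r))^{-1}\int_{B(x,r)}g\,d\Hdr$ of $g$ with respect to $\Hdr$. Factoring $\mu(B(x,r))/r^d=\big(\Hdr(B(x,r))/r^d\big)\cdot\big(\Hdr(B(x,r))^{-1}\int_{B(x,r)}g\,d\Hdr\big)$ and invoking the upper Ahlfors bound of Lemma~\ref{lemma:UpperAhlfors}, one obtains $M\mu(x)\le C\,\mathcal{M}g(x)$. Since the centered maximal operator with respect to any Radon measure on $\Rp$ is bounded on $L^2$ (a consequence of the Besicovitch covering theorem, which holds without any doubling hypothesis on $\Hdr$), $\mathcal{M}g\in L^2(\Hdr)$, whence $M\mu\in L^2(\Hdr)$ and, by Cauchy--Schwarz against $g$, $M\mu\in L^1(\mu)$. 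The majorant $C_1 M\mu+C_2$ is therefore in $L^1(\mu)$, and dominated convergence yields $\Id(\mu)=\lim_{s\uparrow d}\int(d-s)U_s^{\mu}\,d\mu=\int\Ud{\mu}\,d\mu$. I expect the crux to be exactly this uniform domination: securing $L^2(\Hdr)$-boundedness of the (possibly non-doubling) maximal operator and confirming that the constants in the layer-cake estimate stay bounded as $s\uparrow d$.
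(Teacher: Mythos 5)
Your proposal is correct and follows essentially the same route as the paper: the layer-cake representation of $U_s^{\mu}$ split at $\diam A$, the upper Ahlfors bound of Lemma~\ref{lemma:UpperAhlfors} to control $\Hdr(B(x,r))/r^d$, the $L^2(\Hdr)$-boundedness of the centered maximal function together with Cauchy--Schwarz to get a $\mu$-integrable majorant, and dominated convergence to conclude. The only differences are cosmetic --- the paper performs the factorization $\mu(B)/r^d = \bigl(\mu(B)/\Hdr(B)\bigr)\bigl(\Hdr(B)/r^d\bigr)$ inside the layer-cake integral rather than at the level of the maximal functions, and your explicit appeal to the Besicovitch covering theorem to justify the maximal inequality for the (possibly non-doubling) measure $\Hdr$ is a point the paper asserts without proof.
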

\begin{proof}
The maximal function of $\mu$ with respect to $\Hdr$ may be expressed 
as
$$
M_{\Hdr}\mu(x) := \sup_{r>0}\frac{\mu(B(x,r))}{\Hdr(B(x,r))} = \sup_{r>0}\frac{1}{\Hdr(B(x,r))} \int_{B(x,r)} \frac{d\mu}{d\Hdr} d\Hdr.
$$
The maximal function maps $L^2(\Hdr)$ to itself and so $M_{\Hdr}\mu(x)\in L^2(\Hdr)$. 

We construct a $\mu$-integrable function that bounds $(d-s)U_s^{\mu}$ for all $s\in(0,d)$. Lemma~\ref{lemma:UdRN} holds $
\mu$-a.e. and, for an $x$ for which Lemma~\ref{lemma:UdRN} holds, we follow an argument found in~\cite[ch. 2]{Mattila1} to obtain 
\begin{eqnarray}
\label{eq:MaxBound1}
\nonumber
(d-s)\int\Rk{x}{y}{s}d\mu(y) &=&  (d-s)\int_0^\infty \mu\left(\left\{y\in\Rp : \Rk{x}{y}{s} > t\right\}\right) dt\\
\nonumber
&=&  (d-s)s\int_0^\infty \frac{\mu(B(x,r))}{r^{s+1}} dr\\
&=& (d-s)s\int_0^{\diam A} \frac{\mu(B(x,r))}{\Hdr(B(x,r))}\frac{\Hdr(B(x,r))}{r^d}r^{d-s-1}dr\\
\label{eq:MaxBound2}
&+&(d-s)s\int_{\diam A}^\infty\frac{\mu(B(x,r))}{r^{s+1}}dr.
\end{eqnarray}
The right hand side of \eqref{eq:MaxBound1} is bounded by $CM_{\Hdr}\mu(x) s(\diam A)^{d-s}$, where $C$ is the constant 
established in Lemma~\ref{lemma:UpperAhlfors}. The quantity in \eqref{eq:MaxBound2} is bounded by $(d-s)\mu(\Rp)(\diam A)^{-
s}$. We may maximize these bounds over $s\in[0,d]$ to obtain a bound $(d-s)U_s^{\mu}$ of the form $C_1 M_{\Hdr}
\mu(x)  + C_2\mu(\Rp)$. The $\mu$-integrability of this bound is established via the Cauchy-Schwarz inequality as follows
$$
\int\left( C_1 M_{\Hdr}\mu(x)  + C_2\mu(\Rp) \right)d\mu  \leq C_1\left\|M_{\Hdr}\mu(x) \right\|_{2,\Hdr}
\left\|\frac{d\mu}{d\Hdr} \right\|_{2,\Hdr} +
C_2\mu(\Rp)^2 < \infty.
$$
By dominated convergence the claim follows.
\end{proof}

\subsection{Proof of Theorem 1.2}

\begin{proof}[Proof of Theorem~\ref{th:Existence}]
Let $A$ satisfy the hypotheses of Theorem~\ref{th:Existence}. The first two claims of the theorem are proven in lemmas \ref{lemma:AbsContFiniteDEn}, \ref{lemma:L2Argument} and \ref{lemma:MaxArgument}. 

Let $\nu$ denote the finite measure $(2^dd)^{-1}\Hdr$. The set of measures with finite normalized $d$-energy are identified with 
the non-negative cone in $L^2(\nu)$ (denoted $L^2(\nu)_+$) via the map $\mu\leftrightarrow d\mu/d\nu$. Under this map we have $\Id(\mu)=\left\| d\mu/d
\nu\right\|_{2,\nu}^2$. A measure $\mu$ of finite $d$-energy is a probability measure if  and only if  $\left\|d\mu/d\nu\right\|_{1,\nu} = 1$. The last claim in the theorem is proven by finding a unique, non-negative function $f$ that minimizes $\|\cdot\|_{2,\nu}$ subject to the constraint $\|f\|_{1,\nu}=1$. We address this problem using the following, standard Hilbert space argument. 

The non-negative constant function $1/\nu(\Rp)$ satisfies the constraint $\|1/\nu(\Rp)\|_{1,\nu}=1$. Let $f\in L^2(\nu)_+$ be such that $\|f\|_{1,\nu}=1$ and $\|f\|_{2,\nu} \le \|1/\nu(\Rp) \|_{2,\nu}$, then
$$
\frac{1}{\nu(\Rp)}=\left\|\frac{f}{\nu(\Rp)}\right\|_{1,\nu} = \left\langle f, \frac{1}{\nu(\Rp)} \right\rangle_\nu 
\leq \|f\|_{2,\nu}\left\|\frac{1}{\nu(\Rp)}\right\|_{2,\nu} 
\leq \left\|\frac{1}{\nu(\Rp)}\right\|^2_{2,\nu} = \frac{1}{\nu(\Rp)}.
$$
Thus
$$
\left\langle f, \frac{1}{\nu(\Rp)} \right\rangle_\nu = \|f\|_{2,\nu}\left\|\frac{1}{\nu(\Rp)}\right\|_{2,\nu}.
$$
From the Cauchy-Schwarz inequality $f=1/\nu(\Rp)$ $\nu$-a.e. By the identification above, the measure, $\lambda^d := \Hdr/\Hd(A)\in\MAp$, uniquely minimizes $\Id$ over $\MAp$.
\end{proof}

\section{The Weak-Star Convergence of $\mu^s$ to $\lambda^d$}\label{sec:Convergence}

\begin{lemma} \label{OnePlusEtaSimple} Let $K\subset\Rd$ be a compact
set. Then, for every $\eta>0$, there is an $s_{0}=s_{0}(\eta)$
such that, for any $s$ and $t$ satisfying $s_{0}<s<t<d$ and any measure
$\mu\in\mathcal{M}(K)$, \[
(d-s)\Is(\mu)\leq(1+\eta)\left[(d-t)I_{t}(\mu)+\eta\mu(\Rd)^{2}\right].\]
 \end{lemma}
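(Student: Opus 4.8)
The plan is to compare the two energies on the Fourier side rather than pointwise. A pointwise domination of $(d-s)\Rk{x}{y}{s}$ by a multiple of $(d-t)\Rk{x}{y}{t}$ is hopeless near the diagonal whenever $s$ stays away from $t$; the additive term $\eta\,\mu(\Rd)^2$ is precisely what is needed to absorb the low frequencies, and the right place to see this is the representation recalled above. For $\mu\in\mathcal M(K)$ and $s\in(0,d)$ it gives, with both sides possibly infinite,
$$
(d-s)\Is(\mu)=(d-s)c(s,d)\int_{\Rd}|\xi|^{-(d-s)}|\hat\mu(\xi)|^{2}\,d\Ld(\xi),
$$
and similarly for $t$. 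Fix $R\in(0,1]$ and split the integral at $|\xi|=R$.

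On the high-frequency part I would use that for $|\xi|>R$ and $t-s>0$,
$$
|\xi|^{-(d-s)}=|\xi|^{-(d-t)}|\xi|^{-(t-s)}\le R^{-(t-s)}|\xi|^{-(d-t)} .
$$
Integrating against $|\hat\mu|^{2}$ and then enlarging the domain to all of $\Rd$ (the added integrand is nonnegative) turns this into
$$
(d-s)c(s,d)\int_{|\xi|>R}|\xi|^{-(d-s)}|\hat\mu|^{2}\,d\Ld\le R^{-(t-s)}\,\frac{(d-s)c(s,d)}{(d-t)c(t,d)}\,(d-t)I_{t}(\mu).
$$
On the low-frequency part I would use $|\hat\mu(\xi)|\le\hat\mu(0)=\mu(\Rd)$ and the integrability of $|\xi|^{-(d-s)}$ at the origin, namely $\int_{|\xi|\le R}|\xi|^{-(d-s)}\,d\Ld=\omega_d R^{s}/s$, to get
$$
(d-s)c(s,d)\int_{|\xi|\le R}|\xi|^{-(d-s)}|\hat\mu|^{2}\,d\Ld\le (d-s)c(s,d)\,\frac{\omega_d R^{s}}{s}\,\mu(\Rd)^{2}.
$$

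It then remains to make the two prefactors small and close to $1$, using $(d-s)c(s,d)\to\omega_d$ as $s\uparrow d$. Given $\eta$, I would first choose $R=R(\eta)\in(0,1]$ with $\omega_d^{2}R^{d}/d<\eta$; since the low-frequency prefactor tends to $\omega_d^{2}R^{d}/d$ as $s\uparrow d$, it is $\le\eta\le(1+\eta)\eta$ once $s$ is close enough to $d$. With $R$ now fixed, I would choose $s_0=s_0(\eta)$ close to $d$: because $t-s<d-s_0$ we have $R^{-(t-s)}\le R^{-(d-s_0)}\to1$, and the ratio $(d-s)c(s,d)/\big((d-t)c(t,d)\big)\to1$ uniformly for $s,t\in(s_0,d)$ (numerator and denominator both converge to $\omega_d$), so the high-frequency prefactor is $\le 1+\eta$ for $s_0$ sufficiently near $d$. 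Adding the two displayed estimates yields $(d-s)\Is(\mu)\le(1+\eta)\big[(d-t)I_{t}(\mu)+\eta\,\mu(\Rd)^{2}\big]$, as claimed; note that $R$ and $s_0$ depend only on $\eta$ and $d$, not on $K$.

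The only genuine obstacle is conceptual — recognizing that one must work on the Fourier side and let the total-mass term carry the low frequencies — after which the estimates are routine; the one point needing care is the \emph{uniformity} of the high-frequency prefactor over all admissible pairs $(s,t)$ when $s_0$ is fixed, which is exactly what the bound $t-s<d-s_0$ and the convergence $(d-s)c(s,d)\to\omega_d$ provide.
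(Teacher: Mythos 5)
Your proof is correct and takes essentially the same approach as the paper: both pass to the Fourier-side representation $I_s(\mu)=c(s,d)\int_{\Rd}|\xi|^{s-d}|\hat{\mu}(\xi)|^2\,d\Ld(\xi)$, split into low and high frequencies so that the total-mass term $\mu(\Rd)^2$ absorbs the low-frequency contribution, and conclude via $(d-s)c(s,d)\to\omega_d$. The only difference is bookkeeping: the paper splits at $|\xi|=1$ and bounds the kernel difference $|\xi|^{s-d}-|\xi|^{t-d}$ on the unit ball by $\mu(\Rd)^2$ times a small integral, while you split at a small radius $R(\eta)$ and instead absorb a factor $R^{-(t-s)}\le R^{-(d-s_0)}\to 1$ into the high-frequency constant; both variants are valid.
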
 
\begin{proof} 
If $I_s(\mu)=\infty$, then $I_t(\mu)=\infty$ for $t>s$ and the lemma
holds trivially.  Now suppose that $I_s(\mu)<\infty$ for some $s$ such that $(d-t)c(t,d) > \omega_d/2$ for all $t\in(s,d)$ and observe that
\begin{eqnarray}
\nonumber
(d-s)I_{s}(\mu) & = & (d-s)c(s,d)\int_{\Rd}|\xi|^{s-d}|\hat{\mu}(\xi)|^{2}d\Ld(\xi)\\
\label{eq:e6} 
& = & \frac{(d-s)c(s,d)}{(d-t)c(t,d)}(d-t)c(t,d)\int_{\Rd}|\xi|^{s-d}|\hat{\mu}(\xi)|^{2}d\Ld(\xi).
 \end{eqnarray} 
We may approximate the integral in
 \eqref{eq:e6} as follows. \begin{eqnarray*} & &
 \int_{\Rd}|\xi|^{s-d}|\hat{\mu}(\xi)|^{2}d\Ld(\xi)\\ & = &
 \int_{|\xi|\leq1}|\xi|^{s-d}|\hat{\mu}(\xi)|^{2}d\Ld(\xi)+\int_{|\xi|>1}|\xi|^{s-d}|\hat{\mu}(\xi)|^{2}d\Ld(\xi)\\
 & \leq &
 \int_{|\xi|\leq1}(|\xi|^{s-d}-|\xi|^{t-d})|\hat{\mu}(\xi)|^{2}d\Ld(\xi)+\int_{|\xi|\leq1}|\xi|^{t-d}|\hat{\mu}(\xi)|^{2}d\Ld(\xi) + \int_{|\xi|>1}|\xi|^{t-d}|\hat{\mu}(\xi)|^{2}d\Ld(\xi)\\ 
 & \leq &
 \mu(\Rd)^{2}\int_{|\xi|\leq1}(|\xi|^{s-d}-|\xi|^{t-d})d\Ld(\xi)+\int_{\Rd}|\xi|^{t-d}|\hat{\mu}(\xi)|^{2}d\Ld(\xi).\end{eqnarray*}
By \eqref{eq:DminusStimesCSD} we may pick $s_0\in(0,d)$ high enough so that, for any $s$ and $t$ satisfying $s_0<s<t<d$ $$
 \frac{(d-s)c(s,d)}{(d-t)c(t,d)}<1+\eta, \qquad (d-t)c(t,d) < 2\omega_d,$$ and $$
 \left|\int_{|\xi|\leq1}(|\xi|^{s-d}-|\xi|^{t-d})d\Ld(\xi)\right|<\frac{\eta}{2\omega_d}.$$
 \end{proof}

The following generalization of Lemma \ref{OnePlusEtaSimple} will be applied
repeatedly to measures supported on the bi-Lipschitz image of a compact set, $K\subset\Rd$. Let $\mu\in{\mathcal M}(\varphi(K))$ be such a measure. Using
\eqref{eq:ImageMeasureEnergy} to bound the $s$-energy of $\varphi_{\#}^{-1}\mu$,
applying Lemma \ref{OnePlusEtaSimple} to $\varphi_{\#}^{-1}\mu$, and then
using \eqref{eq:ImageMeasureEnergy} again to bound the $t$-energy of the measure $\varphi_{\#}\varphi_{\#}^{-1}\mu=\mu$ we 
obtain the following.

\begin{corollary} \label{OnePlusEtaSingleEmbed} Let $K\subset\Rd$ be a compact
set and suppose $\varphi:K\to \Rp$ is bi-Lipschitz with constant $L$. Then, for every $\eta>0$ there is an $s_{0}=s_{0}(\eta)$
such that for any $s$ and $t$ satisfying $s_{0}<s<t<d$ and any
measure  
$\mu\in\mathcal{M}(\varphi(K))$, we have 
$$
(d-s)\Is(\mu)\leq L^{d}(1+\eta)\left[L^{d}(d-t)I_{t}(\mu)+\eta\mu(\Rp)^{2}\right].$$
 \end{corollary}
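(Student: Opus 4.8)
The plan is to transfer the estimate of Lemma~\ref{OnePlusEtaSimple}, which lives on compact subsets of $\Rd$, over to the image set $\varphi(K)\subset\Rp$ by pushing $\mu$ back and forth through $\varphi$ and controlling the distortion with \eqref{eq:ImageMeasureEnergy}. Concretely, I would set $\nu := \varphi_{\#}^{-1}\mu\in\mathcal{M}(K)$. Since $\varphi^{-1}:\varphi(K)\to K$ is bi-Lipschitz with the same constant $L$, applying \eqref{eq:ImageMeasureEnergy} to the map $\varphi^{-1}$ and the measure $\mu$ gives the push-back bound $\Is(\mu)\leq L^{s}\Is(\nu)$, while applying \eqref{eq:ImageMeasureEnergy} to the map $\varphi$ and the measure $\nu$ (noting $\varphi_{\#}\nu=\mu$) gives the push-forward bound $I_{t}(\nu)\leq L^{t}I_{t}(\mu)$. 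Because image measures preserve total mass, $\nu(\Rd)=\mu(\Rp)$.

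I would then take $s_0=s_0(\eta)$ to be exactly the threshold that Lemma~\ref{OnePlusEtaSimple} supplies for $K$ and $\eta$, and chain the three ingredients. For $s_0<s<t<d$ the estimate would read
\begin{align*}
(d-s)\Is(\mu) &\leq L^{s}(d-s)\Is(\nu) \\
&\leq L^{s}(1+\eta)\left[(d-t)I_{t}(\nu)+\eta\,\nu(\Rd)^{2}\right] \\
&\leq L^{s}(1+\eta)\left[L^{t}(d-t)I_{t}(\mu)+\eta\,\mu(\Rp)^{2}\right],
\end{align*}
where the first line is the push-back bound, the second is Lemma~\ref{OnePlusEtaSimple} applied to $\nu\in\mathcal{M}(K)$, and the third uses the push-forward bound together with mass preservation.

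The final step is bookkeeping on the powers of $L$. Since a (nontrivial) bi-Lipschitz constant satisfies $L\geq 1$ and $s,t<d$, we have $L^{s}\leq L^{d}$ and $L^{t}\leq L^{d}$, so the last display is dominated by $L^{d}(1+\eta)\left[L^{d}(d-t)I_{t}(\mu)+\eta\,\mu(\Rp)^{2}\right]$, which is the asserted inequality. I do not expect a substantial obstacle here; this is a routine transport argument and the genuine analytic content sits in Lemma~\ref{OnePlusEtaSimple}. The only points requiring care are orienting \eqref{eq:ImageMeasureEnergy} correctly at each end (an upper bound for $\Is(\mu)$ in terms of $\Is(\nu)$, but the reverse inequality to dominate $I_{t}(\nu)$ by $I_{t}(\mu)$) and absorbing $L^{s},L^{t}$ into $L^{d}$. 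The degenerate case $\Is(\mu)=\infty$ is handled automatically, since it forces $I_{t}(\mu)=\infty$ as well through the same two image-measure inequalities, making the target bound trivially true.
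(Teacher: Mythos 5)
Your proposal is correct and follows exactly the paper's own argument: push $\mu$ back to $K$ via $\varphi_{\#}^{-1}$, bound its $s$-energy using \eqref{eq:ImageMeasureEnergy}, apply Lemma~\ref{OnePlusEtaSimple} to $\varphi_{\#}^{-1}\mu$, then use \eqref{eq:ImageMeasureEnergy} again on the $t$-energy of $\varphi_{\#}\varphi_{\#}^{-1}\mu=\mu$, absorbing $L^{s},L^{t}\le L^{d}$. Your bookkeeping on mass preservation and the infinite-energy degenerate case is also sound, so there is nothing to add.
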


In the proof of the following proposition we shall use the Principle of Descent (cf. \cite[ch.1 \S 4]{Landkof1}), a consequence of which is that, if $s
\in(0,d)$ and if a sequence of compactly supported Radon measures $\{\mu_n\}_{n=1}^\infty$ converges in the weak-star 
topology to $\psi$, then $I_s(\psi) \le \liminf_{n\to\infty}\ I_s(\mu_n)$. 

Proposition~\ref{FlatConvergence} is a simple case of Theorem \ref{th:Convergence} and its proof illustrates the approach used in the proof of Theorem~\ref{th:Convergence}.

\begin{proposition}\label{FlatConvergence}

Let $A\subset\Rd$ be a compact set such that $\Hd(A)>0$. Let $\mu^{s}$
denote the $s$-equilibrium measure supported on $A$. Then $\mu^{s}\cws\lambda^d:=\Hdr/\Hd(A)$ as $s\uparrow d$.

\end{proposition}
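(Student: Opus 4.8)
The plan is to prove weak-star convergence of $\mu^s$ to $\lambda^d$ on a compact $A\subset\Rd$ by a compactness-plus-identification argument. Since $\MAp$ is weak-star compact (as $A$ is compact), it suffices to show that \emph{every} weak-star limit point of the family $\{\mu^s\}_{s\uparrow d}$ equals $\lambda^d$. So I take an arbitrary sequence $s_n\uparrow d$ with $\mu^{s_n}\cws\psi$ for some $\psi\in\MAp$, and aim to show $\psi=\lambda^d$. By Theorem~\ref{th:Existence}(3), $\lambda^d$ is the \emph{unique} minimizer of $\Id$ over $\MAp$, so it is enough to prove that $\psi$ itself minimizes $\Id$, i.e.\ that $\Id(\psi)\le\Id(\lambda^d)$.

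The key estimate tying the $s$-energies to the normalized $d$-energy is Lemma~\ref{OnePlusEtaSimple}. The idea is to bound $\Id(\psi)$ from above by the limiting scaled $s$-energies of the equilibrium measures. First I would control $\Id(\psi)$ using the Principle of Descent: for each fixed $s\in(0,d)$, weak-star convergence $\mu^{s_n}\cws\psi$ gives $\Is(\psi)\le\liminf_{n}\Is(\mu^{s_n})$. However, $\mu^{s_n}$ is the \emph{minimizer} of $I_{s_n}$, not of $I_s$, so I cannot directly compare $\Is(\mu^{s_n})$ with $I_{s_n}(\mu^{s_n})$ for fixed $s$. This is where Lemma~\ref{OnePlusEtaSimple} enters: fixing $\eta>0$, for $s_0(\eta)<s<t<d$ and any measure $\mu$,
$$
(d-s)\Is(\mu)\le(1+\eta)\bigl[(d-t)I_t(\mu)+\eta\,\mu(\Rd)^2\bigr].
$$
Applying this with $\mu=\mu^{t}$ (so $t=s_n$, and an auxiliary $s$ slightly below) and using minimality $I_{s_n}(\mu^{s_n})\le I_{s_n}(\lambda^d)$, I can bound $(d-s)\Is(\mu^{s_n})$ in terms of $(d-s_n)I_{s_n}(\lambda^d)+\eta$, and the latter converges to $\Id(\lambda^d)$ as $n\to\infty$ by the definition of $\Id$ and Theorem~\ref{th:Existence}(1) (since $\lambda^d\ll\Hdr$ with bounded density, $\Id(\lambda^d)<\infty$).

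Concretely, I would fix $\eta>0$ and a value $s$ with $s_0(\eta)<s<d$, then for all large $n$ with $s_n>s$ write
$$
(d-s)\Is(\mu^{s_n})\le(1+\eta)\bigl[(d-s_n)I_{s_n}(\mu^{s_n})+\eta\bigr]
\le(1+\eta)\bigl[(d-s_n)I_{s_n}(\lambda^d)+\eta\bigr],
$$
using $\mu^{s_n}(\Rd)=1$ and minimality. Taking $\liminf_{n\to\infty}$ and invoking the Principle of Descent on the left gives
$$
(d-s)\Is(\psi)\le\liminf_n(d-s)\Is(\mu^{s_n})\le(1+\eta)\bigl[\Id(\lambda^d)+\eta\bigr].
$$
Now I let $s\uparrow d$: by definition $\lim_{s\uparrow d}(d-s)\Is(\psi)=\Id(\psi)$, which exists as an extended real by Theorem~\ref{th:Existence}(1). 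This yields $\Id(\psi)\le(1+\eta)[\Id(\lambda^d)+\eta]$, and letting $\eta\downarrow0$ gives $\Id(\psi)\le\Id(\lambda^d)$. Since $\psi\in\MAp$ and $\lambda^d$ is the strict minimizer, $\psi=\lambda^d$.

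\textbf{The main obstacle} I expect is the order-of-limits bookkeeping: there are three limiting parameters ($s\uparrow d$, $n\to\infty$, and $\eta\downarrow0$) and they must be taken in the right order, with Lemma~\ref{OnePlusEtaSimple} requiring $s_0(\eta)<s<t<d$ so that $s$ and $t=s_n$ are simultaneously close to $d$. One must be careful that the $\eta\mu(\Rd)^2$ term stays controlled (here harmless since $\mu^{s_n}$ are probability measures) and that $(d-s_n)I_{s_n}(\lambda^d)\to\Id(\lambda^d)$, which is exactly the definition of $\Id$ applied to the fixed measure $\lambda^d$ together with its finiteness from Theorem~\ref{th:Existence}(1). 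The Principle of Descent supplies the needed lower semicontinuity of $\Is$ at fixed $s$ to push the weak-star limit through. Everything else is routine once this scaffolding is in place.
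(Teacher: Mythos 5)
Your proposal is correct and follows essentially the same route as the paper's own proof: the identical chain of estimates using the Principle of Descent, Lemma~\ref{OnePlusEtaSimple} applied with $t=s_n$, the minimality $I_{s_n}(\mu^{s_n})\le I_{s_n}(\lambda^d)$, and the uniqueness of the minimizer from Theorem~\ref{th:Existence}, with the three limits taken in the same order ($n\to\infty$, then $s\uparrow d$, then $\eta\downarrow 0$). The only cosmetic difference is that you spell out the weak-star compactness of $\MAp$ and the convergence $(d-s_n)I_{s_n}(\lambda^d)\to\Id(\lambda^d)$, which the paper leaves implicit.
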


\begin{proof}

Let $\psi\in\MAp$ be a weak-star cluster point of $\mu^{s}$ as $s\uparrow d$.
Let $\{s_{n}\}_{n=1}^{\infty}\uparrow d$ such that $\mu^{s_{n}}\cws\psi$
as $n\rightarrow\infty$. Let $\eta>0$ be arbitrary, $s_{0}$ be
as provided by Lemma \ref{OnePlusEtaSimple}, and let $s\in(s_{0},d)$.
We have \begin{eqnarray*}
(d-s)\Is(\psi) & \leq & \liminf_{n\rightarrow\infty}\ (d-s)\Is(\mu^{s_{n}})\\
 & \leq & \liminf_{n\rightarrow\infty}\ (1+\eta)\left[(d-s_{n})I_{s_{n}}(\mu^{s_{n}})+\eta\right]\\
 & \leq & \liminf_{n\rightarrow\infty}\ (1+\eta)\left[(d-s_{n})I_{s_{n}}(\lambda^d)+\eta\right]\\
 & = & (1+\eta)\left[\Id(\lambda^d)+\eta\right],\end{eqnarray*}
 where the first inequality is an application of the Principle of
Descent. The second
inequality follows from Lemma \ref{OnePlusEtaSimple} where $t$ in
the statement of the lemma is chosen to be $s_{n}$, and the third
from the minimality of $I_{s_{n}}(\mu^{s_{n}})$. 

The variable $s$ may be taken arbitrarily close to $d$, and so $\Id(\psi)\leq(1+\eta)[\Id(\lambda^d)+\eta]$.
The variable $\eta$ was also chosen arbitrarily and we conclude $\Id(\psi)\leq\Id(\lambda^d)$.
Theorem \eqref{th:Existence} ensures that $\lambda^d$ is the unique probability measure that
minimizes $\Id$, and so $\psi=\lambda^d$. Since this holds
for any weak-star cluster point, the proposition is proven.
\end{proof}

The rest of the paper shall employ several classical results
from potential theory (cf. \cite{Landkof1}). Let $\Es$ denote the set
of all signed Radon measures supported in $\Rp$ of finite total variation such that
$\mu$ is an element of $\Es$ if  and only if  $\Is(|\mu|)<\infty$. The set
$\Es$ is a vector space, and, when combined with the following bilinear form
$$\Is(\mu,\nu)=\iint\Rk{x}{y}{s}d\mu(x)d\nu(y),$$
is a pre-Hilbert space. Further, for $\mu\in\Es$
$$\Is(\mu)=\int U_{s}^{\mu}d\mu.$$ A property is said to hold
\emph{approximately everywhere}, if it holds everywhere except on a
set of points contained in a compact set that supports no non-trivial
measures in $\Es$. For $s<\dim A$, the
equilibrium measure $\mu^s$ satisfies $U_{s}^{\mu^{s}}=I_{s}(\mu^{s})$
approximately everywhere in $\supp\left\{ \mu^{s}\right\} $. In
particular $U_{s}^{\mu^{s}}=I_{s}(\mu^{s})$ $\mu^{s}$-a.e.

The proof of Theorem \ref{th:Convergence} follows essentially the same
approach used in the proof of Proposition \ref{FlatConvergence}. The
only technical hurdle is to establish an analog of Lemma
\ref{OnePlusEtaSimple} for the case when $A$ is strongly $(\Hd, d)$-rectifiable and of lower dimension
than that of the embedding space, $\Rp$. This is accomplished by breaking $A$ into near isometries of compact subsets of $\Rd$, establishing the desired estimate one each piece, and showing that the pieces can be glued back together without affecting the estimate. This is the content of lemmas \ref{SmallSubsetsHaveSmallEnergy}, \ref{DisjointPiecesPlusSmallEnergy} and \ref{OnePlusEtaEmbedded}.

\begin{lemma}\label{SmallSubsetsHaveSmallEnergy}

Let $A\subset\Rp$ be a compact, strongly $(\Hd, d)$-rectifiable set such that
$\Hd(A)>0$.  Let $K\subset\Rd$ be compact, and
$\varphi:K\rightarrow\Rp$ a bi-Lipschitz map such that
$\varphi(K)\subset A$. Then, for every $\varepsilon>0$, there is
an $s_{0}=s_{0}(\varepsilon)$ and a constant
$C_{K,\varphi}=C_{K,\varphi}(A,K,\varphi)$ such that, for any Borel set
$B\subset\Rp$ satisfying $\Hdr(\partial B)=0$ and any
$s\in(s_{0},d),$
$$ \limsup_{t\uparrow d}\
(d-s)I_{s}\left(\mu_{B\cap\varphi(K)}^{t}\right)\leq
C_{K,\varphi}\sqrt{\Hdr(B)}+\varepsilon.
$$
The boundary, $\partial B$, is computed in the usual topology on $\Rp$.

\end{lemma}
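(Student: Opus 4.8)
The plan is to control the normalized $s$-energy of the restricted measure $\nu^t:=\mu^t_{B\cap\varphi(K)}$ by transferring the estimate of Corollary~\ref{OnePlusEtaSingleEmbed} from the fixed exponent $s$ to the equilibrium exponent $t$, and then to extract the factor $\sqrt{\Hdr(B)}$ from a weak-star cluster point of finite $d$-energy. Since $\nu^t$ is supported on $\varphi(K)$, for a fixed $\eta>0$ Corollary~\ref{OnePlusEtaSingleEmbed}, with its free exponent taken equal to the equilibrium exponent $t$, gives for all $s_0(\eta)<s<t<d$
$$(d-s)\Is(\nu^t)\le L^d(1+\eta)\left[L^d(d-t)I_t(\nu^t)+\eta\right],$$
where $L$ is the bi-Lipschitz constant of $\varphi$ and we used $\nu^t(\Rp)\le1$. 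Because $\nu^t\le\mu^t$ and the kernel is positive, $U_t^{\nu^t}\le U_t^{\mu^t}$, and since $U_t^{\mu^t}=I_t(\mu^t)$ holds $\mu^t$-a.e., integrating against $\nu^t$ yields $I_t(\nu^t)\le I_t(\mu^t)\,m_t$, where $m_t:=\mu^t(B\cap\varphi(K))$. By minimality $(d-t)I_t(\mu^t)\le(d-t)I_t(\lambda^d)\to\Id(\lambda^d)$, so taking $\limsup_{t\uparrow d}$ gives
$$\limsup_{t\uparrow d}(d-s)\Is(\nu^t)\le L^d(1+\eta)\left[L^d\,\Id(\lambda^d)\,\limsup_{t\uparrow d}m_t+\eta\right].$$

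The heart of the argument is the bound $\limsup_{t\uparrow d}m_t\le C_0\sqrt{\Hdr(B)}$ for a constant $C_0$ depending only on $A,K,\varphi$. First I would pass to a sequence $t_n\uparrow d$ along which $m_{t_n}\to\limsup_t m_t$ and, by weak-star compactness of measures supported in the compact set $\varphi(K)$, along which $\mu^{t_n}_{\varphi(K)}\cws\theta$ for some measure $\theta$ supported on $\varphi(K)$. I deliberately take a convergent subsequence of the \emph{restrictions} rather than restricting a limit of the $\mu^{t_n}$, since restriction does not commute with weak-star convergence. To see that $\theta$ has finite $d$-energy, fix a moderate $\eta_0=1$ and apply the Principle of Descent to $\mu^{t_n}_{\varphi(K)}\cws\theta$ together with the same chain of estimates as above, now for the whole piece $\varphi(K)$ in place of $B\cap\varphi(K)$: for every fixed $s\in(s_0(\eta_0),d)$,
$$(d-s)\Is(\theta)\le(d-s)\liminf_{n\to\infty}\Is(\mu^{t_n}_{\varphi(K)})\le L^d(1+\eta_0)\left[L^d\,\Id(\lambda^d)+\eta_0\right]=:D.$$
Letting $s\uparrow d$ gives $\Id(\theta)\le D<\infty$.

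By Theorem~\ref{th:Existence} it then follows that $\theta\ll\Hdr$ with $\|d\theta/d\Hdr\|_{2,\Hdr}^2=\Id(\theta)/(2^dd)\le D/(2^dd)$. Now the hypothesis $\Hdr(\partial B)=0$ enters: since $\bar B$ is closed, the portmanteau inequality on the compact support gives $\limsup_n\mu^{t_n}_{\varphi(K)}(B)\le\theta(\bar B)$, and because $\theta\ll\Hdr$ we have $\theta(\bar B)=\theta(B)\le\|d\theta/d\Hdr\|_{2,\Hdr}\sqrt{\Hdr(B)}$ by Cauchy-Schwarz, the boundary $\partial B$ contributing nothing. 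Since $\mu^{t_n}_{\varphi(K)}(B)=m_{t_n}$, this yields $\limsup_t m_t\le\sqrt{D/(2^dd)}\sqrt{\Hdr(B)}=:C_0\sqrt{\Hdr(B)}$. Substituting into the displayed bound and using $1+\eta<2$, the main term is at most $2L^{2d}\Id(\lambda^d)\,C_0\sqrt{\Hdr(B)}$, which I take as $C_{K,\varphi}$ (depending only on $A,K,\varphi$), while the residual term $L^d(1+\eta)\eta<2L^d\eta$ is made smaller than the prescribed $\varepsilon$ by choosing $\eta$ small, thereby fixing $s_0=s_0(\eta)=s_0(\varepsilon)$.

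I expect the main obstacle to be the extraction of the factor $\sqrt{\Hdr(B)}$. The difficulty is that the equilibrium measures $\mu^t$ for $t<d$ need not be absolutely continuous with respect to $\Hdr$, so there is no uniform $L^2$ density to which Cauchy-Schwarz can be applied directly; the square root must be squeezed out of a weak-star cluster point $\theta$, whose finite $d$-energy (hence $L^2$ density) becomes available only after transferring the energy estimate to the equilibrium exponent and invoking the Principle of Descent and Theorem~\ref{th:Existence}. Keeping the two uses of Corollary~\ref{OnePlusEtaSingleEmbed} separate---one with the small $\eta=\eta(\varepsilon)$ governing the error term, one with the fixed $\eta_0$ governing the constant---is what ensures that $C_{K,\varphi}$ does not depend on $\varepsilon$.
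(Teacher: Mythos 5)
Your proposal is correct, and its skeleton matches the paper's proof: two separate applications of Corollary~\ref{OnePlusEtaSingleEmbed} (one with a fixed moderate $\eta$ to control the constant, one with $\eta$ tied to $\varepsilon$ for the error term), the restriction inequality $I_t(\mu^t_{B\cap\varphi(K)})\le I_t(\mu^t)\,\mu^t(B\cap\varphi(K))$ via the equilibrium property $U_t^{\mu^t}=I_t(\mu^t)$ $\mu^t$-a.e., and a weak-star cluster point of finite normalized $d$-energy to squeeze out $\sqrt{\Hdr(B)}$; you even land on the same constant $C_{K,\varphi}=2L^{2d}\Id(\lambda^d)\sqrt{M/2^dd}$. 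Where you genuinely diverge is the extraction step itself. The paper takes the cluster point $\psi$ of the restrictions $\mu^{t}_{B\cap\varphi(K)}$, shows $\psi(\partial B)=0$ so that $\mu^{t_n}(B\cap\varphi(K))\to\psi(B)$, and then bounds $\psi(B)$ by comparing $\Id(\psi/\psi(B))$ against the \emph{minimal} energy $2^dd/\Hdr(B)$ over $\mathcal{M}_1(\overline{B}\cap A)$ -- i.e.\ it invokes part 3 of Theorem~\ref{th:Existence} on the auxiliary set $\overline{B}\cap A$, which requires the (asserted) observation that $\overline{B}\cap A$ is itself strongly $(\Hd,d)$-rectifiable. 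You instead take the cluster point $\theta$ of $\mu^t_{\varphi(K)}$, use only the closed-set half of portmanteau, and apply Cauchy--Schwarz directly to the density $d\theta/d\Hdr$ furnished by part 1 of Theorem~\ref{th:Existence} applied to $A$ itself: $\theta(\overline B)\le\|d\theta/d\Hdr\|_{2,\Hdr}\sqrt{\Hdr(B)}$. The two mechanisms are quantitatively equivalent (the paper's comparison argument is Cauchy--Schwarz in disguise), but yours is slightly cleaner in that it avoids introducing the auxiliary set $\overline B\cap A$ and re-verifying strong rectifiability for it, while the paper's version makes the ``minimal energy scales like $1/\Hdr(B)$'' intuition explicit. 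Both routes need absolute continuity of the cluster point (to kill $\partial B$), so there is no saving there.
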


\begin{proof}
Without loss of generality assume $\varepsilon \in (0,1)$. Let $B\subset \Rp$ be a Borel set such that $\Hdr(\partial B)=0$. 
Observe that
\begin{equation}\label{eq:energyproductbound}
I_{t}\left(\mu_{B\cap\varphi(K)}^{t}\right)=\int_{B\cap\varphi(K)}U_{t}^{\mu_{B\cap\varphi(K)}^{t}}d\mu^{t}\leq\int_{B\cap\varphi(K)}
U_{t}^{\mu^{t}}d\mu^{t}=I_{t}(\mu^{t})\mu^{t}(B\cap\varphi(K)).
\end{equation}
We bound the quantity $\limsup_{t\uparrow d}\ \mu^{t}(B\cap\varphi(K))$ as
follows. Let $\psi\in\MA$ be a weak-star cluster point of
$\mu_{B\cap\varphi(K)}^{t}$ as $t\uparrow d$, and let $\left\{
t_{n}\right\} _{n=1}^{\infty}\uparrow d$ such that
$\mu_{B\cap\varphi(K)}^{t_{n}}\cws\psi$ as $n\rightarrow\infty$. Let
$L$ denote the bi-Lipschitz constant of $\varphi$. Choose
$\tilde{s}_{0}$ so that Corollary~\ref{OnePlusEtaSingleEmbed} applied
to Radon measures with supported in $\varphi(K)$ holds for $\eta=1$. Let
$\lambda^d:=\Hdr/\Hd(A)$ denote the minimizer of $\Id$ over
$\MAp$.  For any $s\in(\tilde{s}_{0},d)$, 
\begin{eqnarray*}
(d-s)I_{s}(\psi) & \leq & \liminf_{n\rightarrow\infty}\
(d-s)I_{s}\left(\mu_{B\cap\varphi(K)}^{t_{n}}\right)\\ & \leq &
\liminf_{n\rightarrow\infty}\
2L^{d}\left[(d-t_{n})L^{d}I_{t_{n}}(\mu^{t_{n}})+1\right]\\ & \leq &
\liminf_{n\rightarrow\infty}\
2L^{d}\left[(d-t_{n})L^{d}I_{t_{n}}(\lambda^d)+1\right]\\ & = &
2L^{2d}\Id(\lambda^d)+2L^{d}=:M<\infty.
\end{eqnarray*} 
The first inequality follows from the Principle of Descent, the second from
Corollary~\ref{OnePlusEtaSingleEmbed} and the inequality, $I_s(\mu^{t_n}_{B\cap\varphi(K)}) \le I_s(\mu^{t_n})$, and the third from the
minimality of $I_{t_{n}}(\mu^{t_{n}})$. Letting $s\uparrow d$ we see
that, for any weak-star cluster point $\psi$ of
$\mu_{B\cap\varphi(K)}^t$ (as $t\uparrow d$), $\Id(\psi)\leq M$.  Theorem \ref{th:Existence}
ensures that $\psi\ll\Hdr$, and so $\psi(\partial B)=0$, implying 
$\mu^{t_n}(B\cap\varphi(K))=\mu_{B\cap\varphi(K)}^{t_{n}}(B)\rightarrow\psi(B)$ as
$n\rightarrow\infty$. 

The set $\overline{B}\cap A$ is strongly
$d$-rectifiable, and if $\psi(B)>0$, then $\Hdr(B)>0$, implying
$\Hd\left(\overline{B}\cap A\right)>0$ and by Theorem \ref{th:Existence},
$\Id$ is minimized over $\mathcal{M}_{1}\left(\overline{B}\cap
A\right)$ by $\lambda^{d,\overline{B}\cap A} :=
\Hdrest{\overline{B}\cap A}/\Hd\left(\overline{B}\cap
A\right)$. We then have 
$$ 
\frac{2^{d}d}{\Hdr(B)}
=\frac{2^{d}d}{\Hdr\left(\overline{B}\right)}
=\frac{2^{d}d}{\Hd\left(\overline{B}\cap A\right)}
=\Id\left(\lambda^{d,\overline{B}\cap A}\right)
\leq\Id\left(\frac{\psi}{\psi\left(\overline{B}\right)}\right) = \Id\left(\frac{\psi}{\psi(B)}\right)
\leq\frac{M}{\psi(B)^{2}},
$$
and we may conclude 
$$\psi(B)\leq\sqrt{\frac{M}{2^{d}d}\Hdr(B)}.$$
(If $\psi(B)=0$, then the above inequality holds trivially.)  It follows from the above inequality and \eqref{eq:energyproductbound}
that for any Borel set $B\subset\Rp$ with $\Hdr(\partial B)=0$ we have
\begin{equation}\label{initialresult}
\limsup_{t\uparrow d}\ (d-t)I_{t}\left(\mu_{B\cap\varphi(K)}^{t}\right)
\leq
\limsup_{t\uparrow d}\ (d-t)I_{t}(\mu^{t})\limsup_{t\uparrow d}\ \mu^{t}(B\cap \varphi(K))
\leq
\Id(\lambda^d)\sqrt{\frac{M}{2^{d}d}}\sqrt{\Hdr(B)}.
\end{equation}

We complete the proof of this lemma by appealing to
Corollary~\ref{OnePlusEtaSingleEmbed} applied to measures supported on $\varphi(K)$ with
$\eta=\varepsilon/2L^{d}$. If $s_{0}$ is chosen so that
Corollary~\ref{OnePlusEtaSingleEmbed} holds, then, for any
$s\in(s_{0},d)$ and $t\in(s,d)$, \begin{eqnarray*}
(d-s)I_{s}\left(\mu_{B\cap\varphi(K)}^{t}\right) & \leq &
L^{d}\left[\left(1+\frac{\varepsilon}{2L^{d}}\right)\left(L^{d}(d-t)I_{t}\left(\mu_{B\cap\varphi(K)}^{t}\right)+\frac{\varepsilon}{2L^{d}}
\right)\right]\\
& \leq &
2L^{2d}(d-t)I_{t}\left(\mu_{B\cap\varphi(K)}^{t}\right)+\varepsilon.\end{eqnarray*}
Taking the limit superior of both sides as $t\uparrow d$ and appealing
to \eqref{initialresult} completes the proof with
$C_{K,\varphi}=2L^{2d}\Id(\lambda^d)\sqrt{M/2^{d}d}$.
\end{proof}

\begin{lemma}

\label{DisjointPiecesPlusSmallEnergy} Let $A\subset\Rp$ be a compact,
strongly $(\Hd, d)$-rectifiable set such that $\Hd(A)>0$. Then, for every
$\varepsilon>0$, there exists a finite collection of compact
subsets of $\Rd$ $\tilde{K}_{1},\dots,\tilde{K}_{N}$ and a
corresponding set of bi-Lipschitz maps
$\tilde{\varphi}_{1}:\tilde{K}_{1}\rightarrow\Rp,\dots,\tilde{\varphi}_{N}:\tilde{K}_{N}\rightarrow\Rp$
each with bi-Lipschitz constant less than $1+\varepsilon$, such that
\begin{itemize}
\item[1.]
 $\tilde{\varphi}_{i}(\tilde{K}_{i})\cap\tilde{\varphi}_{j}(\tilde{K}_{j})=\emptyset$ for  $i\neq j$, and 
 \item[2.] there is an $s_{0}=s_{0}
(\varepsilon)\in(0,d)$,
such that for $
\tilde{B}:=A\backslash\bigcup_{i=1}^{N}\tilde{\varphi}_{i}(\tilde{K}_{i})$ and all $s\in(s_{0},d)$ we have  \[
\limsup_{t\uparrow d}\ (d-s)I_{s}(\mu_{\tilde B}^{t})\leq\frac{\varepsilon}{N}.\]
\end{itemize}
 \end{lemma}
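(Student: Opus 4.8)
The plan is to realize $A$ as a finite \emph{disjoint} union of near-isometric copies of compact subsets of $\Rd$ plus a remainder $\tilde B$ of small $\Hdr$-measure, and then to bound the energy carried by $\mu^t$ over $\tilde B$ by estimating it piece-by-piece with Lemma~\ref{SmallSubsetsHaveSmallEnergy}.

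First I would produce the pieces. Applying the strong $(\Hd,d)$-rectifiability of $A$ gives a finite family of bi-Lipschitz maps $\varphi_1,\dots,\varphi_N$ with constant less than $1+\varepsilon$ whose images cover $A$ up to a set $L$ with $\dim L<d$ and overlap only in an $\Hd$-null set. Since $\Hdr$ is a Radon measure, inner regularity lets me trim each image to a compact subset $C_i:=\tilde\varphi_i(\tilde K_i)\subset\varphi_i(K_i)$, with $\tilde K_i:=\varphi_i^{-1}(C_i)$ compact, so that the $C_i$ become pairwise disjoint (this is item~1) while $\Hdr\big(A\setminus\bigcup_iC_i\big)$ is as small as I please. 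Writing $\tilde B:=A\setminus\bigcup_iC_i$, the remainder decomposes as $\tilde B=L\cup\bigcup_i\big(\varphi_i(K_i)\setminus C_i\big)$, so $\Hdr(\tilde B)$ can be made smaller than any preassigned $\delta$ because $\Hd(L)=0$.

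Next I reduce the energy estimate to a collection of estimates on single pieces. For $t$ larger than $\dim L$ the equilibrium measure $\mu^t$ has finite $t$-energy and therefore assigns no mass to the dimension-$<d$ set $L$ (a set of Hausdorff dimension below $t$ has zero $t$-capacity, by the identification of capacitory and Hausdorff dimension already used in Lemma~\ref{lemma:AbsContFiniteDEn}); hence $\mu^t_{\tilde B}$ is carried by $\bigcup_i(\varphi_i(K_i)\setminus C_i)$. Partitioning this union as $P_i:=\varphi_i(K_i)\setminus\bigcup_{j<i}\varphi_j(K_j)\subset\varphi_i(K_i)$ and using the pre-Hilbert structure of $\Es$ together with the Cauchy--Schwarz inequality $I_s(\nu_i,\nu_j)\le\sqrt{I_s(\nu_i)}\sqrt{I_s(\nu_j)}$ gives the subadditivity
$$\sqrt{(d-s)I_s(\mu^t_{\tilde B})}\le\sum_{i=1}^N\sqrt{(d-s)I_s(\mu^t_{\tilde B\cap P_i})}.$$
If I can enclose $\tilde B$ in a Borel set $B^\ast$ with $\Hdr(\partial B^\ast)=0$ and $\Hdr(B^\ast)\le\delta$, then $\tilde B\cap P_i\subset B^\ast\cap\varphi_i(K_i)$, so monotonicity of the energy in the restricting set and Lemma~\ref{SmallSubsetsHaveSmallEnergy} (with its own tolerance $\varepsilon'$) yield, for $s$ past a threshold $s_0$ taken to be the maximum of the finitely many thresholds produced,
$$\limsup_{t\uparrow d}(d-s)I_s(\mu^t_{B^\ast\cap\varphi_i(K_i)})\le C_{K_i,\varphi_i}\sqrt{\delta}+\varepsilon'.$$
Choosing $\varepsilon'$ and $\delta$ small enough that each summand is at most $\tfrac1N\sqrt{\varepsilon/N}$ makes the squared sum at most $\varepsilon/N$, which is the asserted bound.

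The main obstacle is the construction of the enclosing set $B^\ast$, and this is where the codimension $p>d$ bites: the naive choice $B^\ast=\Rp\setminus\bigcup_iC_i$ fails, since a $d$-dimensional subset of $\Rp$ has empty interior and hence topological boundary equal to itself, so $\Hdr(\partial B^\ast)=\Hdr(\bigcup_iC_i)$ is large rather than zero. The boundary condition in Lemma~\ref{SmallSubsetsHaveSmallEnergy} instead wants $B^\ast$ to have a thin (codimension-one) boundary meeting $A$ in an $\Hdr$-null set. I would arrange this by building the trimmed pieces from ``nice'' cells: fix a dyadic grid in $\Rp$ whose cube boundaries are $\Hdr$-null (a generic translate works, since the countably many grid hyperplanes are disjoint parallel families and $\Hdr$ is finite), and choose the $C_i$ so that the uncovered remainder $\tilde B$ lies in a finite union of such cubes of small total $\Hdr$-measure; any finite union of these cubes has $\Hdr$-null topological boundary, providing $B^\ast$. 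Equivalently one may enclose $\tilde B$ with a level set $\{\dist(\cdot,\cdot)=\rho\}$ of a distance function, choosing $\rho$ outside the countable set of radii for which that level set carries positive $\Hdr$-measure. Verifying that the remainder can genuinely be swept into such a thin-boundary set of \emph{small} measure---rather than merely into one whose closure is large, which the pathologies of abstract compact $C_i$ would otherwise allow---is the delicate point, and it is here that working through the bi-Lipschitz charts in the full-dimensional domain $\Rd$, where covering by cubes of small volume is automatic, is the cleanest route.
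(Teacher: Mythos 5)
Your proposal follows the same architecture as the paper's proof: excise the overlaps to get disjoint pieces, use the capacitory/Hausdorff dimension argument so that $\mu^t$ ignores the low-dimensional leftover, bound the remainder's energy piece by piece with Lemma~\ref{SmallSubsetsHaveSmallEnergy}, and recombine with Cauchy--Schwarz (your ``triangle inequality'' form of it is equivalent to the paper's Jensen-plus-Cauchy--Schwarz step, and your bookkeeping of thresholds and constants is sound). But there is a genuine gap, located exactly where you flag the ``delicate point,'' and your first-pass construction does not survive it: you trim \emph{first}, choosing $C_i\subset\varphi_i(K_i)$ by abstract inner regularity, and only \emph{afterwards} ask for a Borel set $B^\ast\supset\tilde B$ with $\Hdr(\partial B^\ast)=0$ and $\Hdr(B^\ast)\le\delta$. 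Such a $B^\ast$ need not exist. Inner regularity permits $C_i$ to be nowhere dense in $\varphi_i(K_i)$ (a fat-Cantor-type compact set), in which case $\varphi_i(K_i)\setminus C_i$ is dense in $\varphi_i(K_i)$; then any Borel $B^\ast$ containing it has $\overline{B^\ast}\supset\varphi_i(K_i)$, and since $\overline{B^\ast}=B^\ast\cup\partial B^\ast$, the null-boundary hypothesis forces $\Hdr(B^\ast)\ge\Hd(\varphi_i(K_i))$, which is not small. (Separately, demanding $B^\ast\supset\tilde B$ rather than $B^\ast\supset\tilde B\setminus L$ is already impossible in general: $L$ can be a countable \emph{dense} relatively open subset of $A$, e.g.\ $A=([0,1]\times\{0\})\cup\{\text{points at heights }1/n\text{ projecting densely}\}$; this part is repairable by your own observation that $\mu^t(L)=0$, but as stated it is another instance of the same error.) Since every subsequent estimate runs through $B^\ast$, the argument as written fails.

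The repair is to reverse the order of construction, and this is precisely what the paper does; your closing sketch gestures at it but remains circular, since ``choose the $C_i$ so that the uncovered remainder lies in a finite union of such cubes'' defines the cubes in terms of the remainder and the remainder in terms of the $C_i$. What breaks the circle is identifying \emph{a priori} the one set that must be excised to achieve disjointness: the overlap set $E=\bigcup_{i\neq j}\bigl(\varphi_i(K_i)\cap\varphi_j(K_j)\bigr)$, which is \emph{compact} and \emph{$\Hdr$-null}. By outer regularity, enclose $E$ in an open set $\mathcal{O}$ of small $\Hdr$-measure; around each point of $E$ take an open ball inside $\mathcal{O}$ whose boundary sphere is $\Hdr$-null (only countably many radii at each center can fail, as $\Hdr$ is finite); extract a finite subcover by compactness of $E$ and let $B$ be its union. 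Then $B$ is open with $\Hdr(\partial B)=0$ and $\Hdr(B)$ small, and only now do you trim: $\tilde K_i:=K_i\setminus\varphi_i^{-1}(B)$. Disjointness holds because $E\subset B$, and the remainder on the pieces is $\bigcup_i\bigl(\varphi_i(K_i)\cap B\bigr)\subset B$, so the enclosing set you needed exists tautologically --- it is $B$ itself. Your generic dyadic grid (or generic-radius spheres) would serve equally well as the source of null-boundary cells; the missing idea is not the cells but the fact that the covering must target the compact null set $E$ before any trimming is done.
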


\begin{proof}

Without loss of generality assume $\varepsilon\in(0,1)$. Since $A$ is strongly $(\Hd, d)$-rectifiable, we
may find a set, $A_{0}\subset\Rp$, compact sets $K_{1},\dots,K_{N}
\subset\Rd$ and bi-Lipschitz maps $\varphi_{1}:K_1\to\Rp,\dots,\varphi_{N}:K_N\to\Rp$ with constant less than $1+\varepsilon$ such
that $A=\bigcup_{i=1}^N\varphi_i(K_i) \cup A_0$, where $\dim A_0<d$, and $\Hd(\varphi_i(K_i)\cap\varphi_j(K_j))=0$. Let
$\delta=\varepsilon^2/4N^2\in(0,1)$.  The
set $E=\bigcup_{i\neq
j}^{}\left(\varphi_{i}(K_{i})\cap\varphi_{j}(K_{j})\right)$ is a
compact set of $\Hdr$-measure $0$. Since $\Hdr$ is Radon, there is an
open set $\mathcal{O}$ such that $E\subset\mathcal{O}$ and
$\Hdr(\mathcal{O})<\delta N^{-4}\left(\max\left\{
C_{K_{1},\varphi_{1}},\dots,C_{K_{N},\varphi_{N}}\right\}
\right)^{-2}$ where $C_{K_{i},\varphi_{i}}$ is the constant provided
by Lemma~\ref{SmallSubsetsHaveSmallEnergy} applied to
$\varphi_{i}(K_{i})\subset A$.

For any point $x\in E$, we may find a non-empty open ball
$B(x,R)^{0}\subset\mathcal{O}$.  Since $\partial
B(x,r_{1})\cap\partial B(x,r_{2})=\emptyset$ for any $r_1 \ne r_2$
and since $\Hdr$ is a finite measure, all but a countable set of
values of $r\in(0,R)$ must be such that $\Hdr(\partial
B(x,r))=0$. Construct an open cover of $E$ as follows.\[
\Omega=\left\{ B(x,r)^{0}:x\in E,\
B(x,r)^{0}\subset\mathcal{O},\:\Hdr\left(\partial
B(x,r)\right)=0\right\} .\] Choose a finite sub-cover
$\Omega'\subset\Omega$, of $E$. Let $B=\bigcup_{b\in\Omega'}b$.
Since $\partial B \subset \bigcup_{b\in\Omega'} \partial b$, we have
that $\Hdr(\partial B)=0$. Let $B_{i}=B\cap\varphi_{i}(K_{i})$.  For
any $s$, $t\in(0,d)$ with $t>\max\left\{
s,\dim A_{0}\right\} $ we have, by the equality of the Hausdorff and capacitory dimensions, that $\mu^t(A_0)=0$ and
hence
$$
(d-s)I_{s}(\mu_{B}^{t})
\leq
(d-s)I_{s}\left(\mu^t_{A_0} + \sum_{i=1}^{N}\mu_{B_{i}}^{t}\right)=\sum_{i,j=1}^{N}(d-s)I_{s}(\mu_{B_{i}}^{t},\mu_{B_{j}}^{t}).
$$ By Jensen's inequality followed by the Cauchy-Schwarz inequality
applied to the inner-product $I_{s}(\cdot,\cdot)$ we have \[
\left[\frac{1}{N^{2}}\sum_{i,j=1}^{N}(d-s)I_{s}(\mu_{B_{i}}^{t},\mu_{B_{j}}^{t})\right]^{2}\leq\frac{1}{N^{2}}\sum_{i,j=1}^{N}\left[(d-s)
I_{s}(\mu_{B_{i}}^{t},\mu_{B_{j}}^{t})\right]^{2}\leq\frac{1}{N^{2}}\sum_{i,j=1}^{N}(d-s)I_{s}(\mu_{B_{i}}^{t})(d-s)I_{s}(\mu_{B_{j}}^
{t}).\]
Let $s_{0}=\max\left\{
\dim A_{0},s_{0,1},\dots,s_{0,N}\right\} $, where
$s_{0,i}$ is the value of $s_{0}$ provided by Lemma
\ref{SmallSubsetsHaveSmallEnergy} applied to
$\varphi_{i}(K_{i})\subset A$, and where the value of $\varepsilon$ in
the statement of Lemma \ref{SmallSubsetsHaveSmallEnergy} is chosen to
be $\delta/N^{2}$.  Combining the previous bounds gives, for
$s\in(s_{0},d)$,\begin{eqnarray*} \left[\limsup_{t\uparrow d}\
(d-s)I_{s}(\mu_{B}^{t})\right]^{2} & \leq &
N^{2}\sum_{i,j=1}^{N}\limsup_{t\uparrow d}\
(d-s)I_{s}(\mu_{B_{i}}^{t})\limsup_{t\uparrow d}\
(d-s)I_{s}(\mu_{B_{j}}^{t})\\ & \leq &
N^{2}\sum_{i,j=1}^{N}\left(C_{K_{i},\varphi_{i}}\sqrt{\frac{\delta}{N^{4}\left(C_{K_{i},\varphi_{i}}\right)^{2}}}+\frac{\delta}{N^{2}}
\right)\left(C_{K_{j},\varphi_{j}}\sqrt{\frac{\delta}{N^{4}\left(C_{K_{j},\varphi_{j}}\right)^{2}}}+\frac{\delta}{N^{2}}\right)\\
& = &
N^{2}\sum_{i,j=1}^{N}\left(\frac{\sqrt{\delta}+\delta}{N^{2}}\right)^{2}\\
& \leq &
4\delta=\left(\frac{\varepsilon}{N}\right)^{2}.\end{eqnarray*}
The value of $s_{0}$, the set $\tilde{B}:=\left(B\cap A\right)\cup A_0$, the compact sets
$\tilde{K}_{i}:=K_{i}\backslash\varphi_{i}^{-1}(B)$, and the
bi-Lipschitz maps $\tilde{\varphi}_i := \varphi_{i}|_{\tilde{K}_i}$
satisfy the properties claimed in the lemma for the value of
$\varepsilon$ given.
\end{proof}
\begin{lemma}\label{OnePlusEtaEmbedded}

Let $A\subset\Rp$ be a strongly $(\Hd, d)$-rectifiable, compact set such
that $\Hd(A)>0$. Then, for every $\eta>0$, there is an $s_{0}=s_{0}(\eta)$,
such that for all $s\in(s_{0},d)$ we have
$$
\limsup_{t\uparrow d}\ (d-s)I_{s}(\mu^{t})\leq(1+\eta)\limsup_{t\uparrow d}\ (d-t)I_{t}(\mu^{t})+\eta.
$$

\end{lemma}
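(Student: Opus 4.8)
The plan is to realize $\mu^t$ as a finite sum of pieces adapted to the decomposition of Lemma~\ref{DisjointPiecesPlusSmallEnergy}, establish the desired comparison on each piece, and control the interactions between pieces. We may assume $\Lambda:=\limsup_{t\uparrow d}(d-t)I_t(\mu^t)<\infty$, since otherwise the claimed inequality is trivial. Fix $\eta>0$. First I would apply Lemma~\ref{DisjointPiecesPlusSmallEnergy} with a parameter $\varepsilon_1\in(0,1)$ (to be fixed at the end in terms of $\eta$ and $\Lambda$), obtaining compact sets $\tilde K_1,\dots,\tilde K_N\subset\Rd$, bi-Lipschitz maps $\tilde\varphi_i$ of constant $L:=1+\varepsilon_1$, the set $\tilde B=A\setminus\bigcup_i\tilde\varphi_i(\tilde K_i)$, and an $s_0'$ so that $\limsup_{t\uparrow d}(d-s)I_s(\mu^t_{\tilde B})\le\varepsilon_1/N$ for $s>s_0'$. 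Writing $E_i:=\tilde\varphi_i(\tilde K_i)$, the sets $\tilde B,E_1,\dots,E_N$ partition $A$, so for each $t$ we have $\mu^t=\nu_0+\rho$ with $\nu_0:=\mu^t_{\tilde B}$, $\nu_i:=\mu^t_{E_i}$, and $\rho:=\sum_{i=1}^N\nu_i$. Since $\mu^t\in\Es$ (indeed $I_s(\mu^t)\le(\diam A)^{t-s}I_t(\mu^t)<\infty$ for $s<t$) and $\nu_0,\rho\le\mu^t$ (so $\nu_0,\rho\in\Es$), the triangle inequality in the pre-Hilbert space $\Es$ gives
\[
\sqrt{(d-s)I_s(\mu^t)}\le\sqrt{(d-s)I_s(\nu_0)}+\sqrt{(d-s)I_s(\rho)}.
\]

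Next I would estimate $I_s(\rho)=\sum_{i,j=1}^N I_s(\nu_i,\nu_j)$ by splitting into diagonal and off-diagonal parts. For the diagonal, Corollary~\ref{OnePlusEtaSingleEmbed} applied to each $E_i$ with a parameter $\eta_1$ and constant $L$, combined with the superadditivity $\sum_i I_t(\nu_i)\le I_t(\mu^t)$ (the $E_i$ are disjoint and the kernel is non-negative) and $\sum_i\nu_i(\Rp)^2\le\bigl(\sum_i\mu^t(E_i)\bigr)^2\le1$, yields
\[
\sum_{i=1}^N(d-s)I_s(\nu_i)\le L^d(1+\eta_1)\bigl[L^d(d-t)I_t(\mu^t)+\eta_1\bigr].
\]
It is essential here to sum before bounding, so as to avoid an otherwise fatal factor of $N$. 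For the off-diagonal part, the $E_i$ are pairwise disjoint compacta and hence separated by a positive distance $\delta:=\min_{i\ne j}\dist(E_i,E_j)>0$; thus $I_s(\nu_i,\nu_j)\le\delta^{-s}\nu_i(\Rp)\nu_j(\Rp)\le\delta^{-s}$ and $\sum_{i\ne j}(d-s)I_s(\nu_i,\nu_j)\le N^2(d-s)\delta^{-s}\to0$ as $s\uparrow d$. Taking $\limsup_{t\uparrow d}$ and setting $M:=L^{2d}(1+\eta_1)\Lambda+L^d(1+\eta_1)\eta_1+N^2(d-s)\delta^{-s}$, I get $\limsup_{t\uparrow d}(d-s)I_s(\rho)\le M$, and combining with $\limsup_{t\uparrow d}(d-s)I_s(\nu_0)\le\varepsilon_1/N$ through the displayed triangle inequality gives
\[
\limsup_{t\uparrow d}(d-s)I_s(\mu^t)\le\Bigl(\sqrt{\varepsilon_1/N}+\sqrt{M}\Bigr)^2.
\]

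It remains to choose parameters. Once $\varepsilon_1,\eta_1$ are small enough that $L^{2d}(1+\eta_1)\le1+\eta/4$ and $L^d(1+\eta_1)\eta_1\le\eta/8$, and $s_0$ is taken close enough to $d$ (which is legitimate, as $\varepsilon_1$ already determines $N$ and $\delta$) that $N^2(d-s)\delta^{-s}\le\eta/8$, one has $M\le(1+\eta/4)\Lambda+\eta/4$. Expanding the square, bounding the cross term by the arithmetic--geometric mean inequality $2\sqrt{(\varepsilon_1/N)M}\le(\eta/4)M+4\varepsilon_1/(N\eta)$, and finally shrinking $\varepsilon_1$ (depending on $\eta$) to absorb the remaining constants, one obtains $\limsup_{t\uparrow d}(d-s)I_s(\mu^t)\le(1+\eta)\Lambda+\eta$ for every $s\in(s_0,d)$, which is the assertion. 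The main obstacle throughout is the interaction terms: cross energies between distinct pieces $E_i$ are controlled only because disjoint compacta are positively separated, so the $(d-s)$ prefactor annihilates them, whereas the interaction of the pieces with the irregular remainder $\tilde B$ resists any distance bound and is instead neutralized by peeling $\nu_0$ off with the triangle inequality, reducing matters to the smallness of its own energy furnished by Lemma~\ref{DisjointPiecesPlusSmallEnergy}. Summing the diagonal contributions via superadditivity rather than term by term is what keeps the coefficient of $\Lambda$ close to $1$ rather than blowing up with $N$.
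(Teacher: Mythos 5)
Your proof is correct and takes essentially the same route as the paper's: the same decomposition furnished by Lemma~\ref{DisjointPiecesPlusSmallEnergy}, Corollary~\ref{OnePlusEtaSingleEmbed} applied to the pieces $\tilde{\varphi}_i(\tilde{K}_i)$ with superadditivity keeping the coefficient of $\limsup_{t\uparrow d}(d-t)I_t(\mu^t)$ free of $N$, and positive separation of the disjoint compacta to annihilate the off-diagonal interactions. The only differences are organizational: you peel off $\mu_{\tilde B}^{t}$ via the triangle inequality in the pre-Hilbert space $\Es$, where the paper instead expands the energy and controls those cross terms by Jensen and Cauchy--Schwarz (both rest on the same Hilbert-space structure), and you arrange the final parameter bookkeeping with the arithmetic--geometric mean inequality rather than the paper's condition \eqref{eq:messyCond}.
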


\begin{proof}

Let $\lambda^d:=\Hdr/\Hd(A)$ denote the unique minimizer of $\Id$ over $\MAp$. Let   $\eta>0$. Choose    $\varepsilon\in(0,1)$ such that
\begin{equation}\label{eq:messyCond}
\max\left\{\left(\varepsilon\left[2+(1+\varepsilon)^{d+1}\right]+2\sqrt{\varepsilon(1+\varepsilon)^{2d+1}\Id(\lambda^d)+\varepsilon^
{2}(1+\varepsilon)^{d+1}}\right),\, \left((1+\varepsilon)^{2d+1}-1\right)\right\}<\eta.
\end{equation}

From Lemma \ref{DisjointPiecesPlusSmallEnergy} there is an
$s_{1}\in(0,d)$, a sequence of compact sets
$\tilde{K}_{1},\dots,\tilde{K}_{N}\subset\Rd$ and a sequence of
bi-Lipschtiz maps
$\tilde{\varphi}_{1}:\tilde{K}_{1}\to\Rp,\dots,\tilde{\varphi}_{N}:\tilde{K}_{N}\to\Rp$
each with constant less than $1+\varepsilon$ such that
$\tilde{\varphi}_{i}(\tilde{K}_{i})\cap\tilde{\varphi}_{j}(\tilde{K}_{j})=\emptyset$
for $i\neq j$, and
$\tilde{B}:=A\backslash\bigcup_{i=1}^{N}\tilde{\varphi}_{i}(\tilde{K}_{i})$
satisfies the following for all $s\in(s_{1},d)$
$$\limsup_{t\uparrow d}\
(d-s)I_{s}(\mu_{\tilde B}^{t})\leq\frac{\varepsilon}{N}.
$$

For $s\in(s_{1},d)$ we
have\begin{eqnarray}
\nonumber
\limsup_{t\uparrow d}\ (d-s)I_{s}(\mu^{t}) & = & \limsup_{t\uparrow d}\ (d-s)I_{s}\left(\mu_{\tilde{B}}^{t}+\sum_{i=1}^{N}\mu_{\tilde
{\varphi}_{i}(\tilde{K}_{i})}^{t}\right)\\
 & \le & \limsup_{t\uparrow d}\ (d-s) I_{s}\left(\mu_{\tilde{B}}^{t}\right)\label{eq:T0}\\
 & + & 2\limsup_{t\uparrow d}\ \sum_{i=1}^{N}(d-s)I_{s}\left(\mu_{\tilde{B}}^{t},\mu_{\tilde{\varphi}_{i}(\tilde{K}_{i})}^{t}\right)\label
{eq:T1}\\
 & + & \limsup_{t\uparrow d}\ \sum_{\stackrel{i,j=1}{i\neq j}}^{N}(d-s)I_{s}\left(\mu_{\tilde{\varphi}_{i}(\tilde{K}_{i})}^{t},\mu_{\tilde
{\varphi}_{j}(\tilde{K}_{j})}^{t}\right)\label{eq:T2}\\
 & + & \limsup_{t\uparrow d}\ \sum_{i=1}^{N}(d-s)I_{s}\left(\mu_{\tilde{\varphi}_{i}(\tilde{K}_{i})}^{t}\right).\label{eq:T3}\end
{eqnarray}

We next find upper bounds for each of the terms in (\ref{eq:T0}--\ref{eq:T3}). 
First,  Lemma \ref{DisjointPiecesPlusSmallEnergy} implies that, for  $s\in(s_{1},d)$, expression  \eqref{eq:T0} is less than  $\varepsilon/N$ .

Second, using  Jensen's inequality and the Cauchy-Schwarz inequality
in the same manner as in the proof of Lemma \ref{DisjointPiecesPlusSmallEnergy}
we have
$$
\sum_{i=1}^{N}(d-s)I_{s}\left(\mu_{\tilde{B}}^{t},\mu_{\tilde{\varphi}_{i}(\tilde{K}_{i})}^{t}\right)\leq\sqrt{N(d-s)I_{s}
\left(\mu_{\tilde{B}}^{t}\right)\sum_{i=1}^{N}(d-s)I_{s}\left(\mu_{\tilde{\varphi}_{i}(\tilde{K}_{i})}^{t}\right)}.
$$
Since each $\tilde{\varphi}_i$ is bi-Lipschitz with constant
$(1+\varepsilon)$, Corollary~\ref{OnePlusEtaSingleEmbed} (with the
values of $\eta$ and $L$ as stated in the corollary chosen to be $\varepsilon$
and $1+\varepsilon$ respectively) ensures that there is some
$s_2\in (s_1,d)$ such that, for $s_2<s<t<d$, we have
\begin{equation}
(d-s)I_{s}\left(\mu_{\tilde{\varphi}_{i}(\tilde{K}_{i})}^{t}\right)\leq(1+\varepsilon)^{2d+1}(d-t)I_{t}\left(\mu_{\tilde{\varphi}_{i}(\tilde{K}_
{i})}^{t}\right)+\varepsilon(1+\varepsilon)^{d+1}\mu_{\tilde{\varphi}_{i}(\tilde{K}_{i})}^{t}(\Rp)^{2}.\label{eq:e7}
\end{equation}
Then \eqref{eq:e7}, together with the bound for \eqref{eq:T0},  implies that expression \eqref{eq:T1} is bounded above 
by
\begin{equation}\label{eq:T1bnd1}
 2\sqrt{N\frac{\varepsilon}{N}\limsup_{t\uparrow d}\ 
 \left[
(1+\varepsilon)^{2d+1} \sum_{i=1}^N(d-t)I_t\left(\mu_{\tilde{\varphi}_{i}(\tilde{K}_{i})}^{t}\right) + 
\varepsilon(1+\varepsilon)^{d+1}\sum_{i=1}^N\mu_{\tilde{\varphi}_{i}(\tilde{K}_{i})}^{t}(\Rp)
 \right]}
% 
% \left[(1+\varepsilon)^{2d+1}\sum_{i=1}^{N}\left((d-t)I_{t}\left(\mu_{\tilde{\varphi}_{i}(\tilde{K}_{i})}^{t}\right)+\varepsilon(1+\varepsilon)^{d+1}\mu_{\tilde{\varphi}_{i}(\tilde{K}_{i})}^{t}(\Rp)^{2}\right)
% \right]}.
% 
 \end{equation}
Using 
$$ \limsup_{t\uparrow d}\ \sum_{i=1}^N (d-t)I_{t}\left(\mu_{\tilde{\varphi}_{i}(\tilde{K}_{i})}^{t}\right)\le  \limsup_{t\uparrow d}\ 
(d-t)I_{t}(\mu^{t})\le  \limsup_{t\uparrow d}\ (d-t)I_t(\lambda^d)=\Id(\lambda^d)$$ 
it follows that, for $s\in (s_2,d)$, expression  \eqref
{eq:T1} is bounded above by
  $$2\sqrt{\varepsilon\left[(1+\varepsilon)^{2d+1}\Id(\lambda^d)+\varepsilon(1+\varepsilon)^{d+1}\right]}.$$

We bound \eqref{eq:T2} as follows. For $ 1 \le i\neq j\le N$, let
$D_{i,j}=\dist(\tilde{\varphi}_{i}(\tilde{K}_{i}),\tilde{\varphi}_{j}(\tilde{K}_{j}))>0$
and let $s_{i,j}\in (0,d)$ be such that
$(d-s)D_{i,j}^{-s}\leq\varepsilon/N^{2}$ for all
$s\in(s_{i,j},d)$. For such an $s$,
$(d-s)I_{s}(\nu_{1},\nu_{2})\leq \nu_{1}(\Rp)\nu_{2}(\Rp)\varepsilon/N^{2}$,
for any $\nu_{1}$, $\nu_{2}\in \MA$ supported on
$\tilde{\varphi}_{i}(\tilde{K}_{i})$ and
$\tilde{\varphi}_{j}(\tilde{K}_{j})$ respectively. Let
$s_{0}:=\max\left\{ s_{2},s_{i,j}:i\neq j\right\} $.  For all
$s\in(s_0,d)$, 
$$
\sum_{\stackrel{i,j=1}{i\neq j}}^{N}(d-s)I_{s}\left(\mu_{\tilde{\varphi}_{i}(\tilde{K}_{i})}^{t},\mu_{\tilde{\varphi}_{j}
(\tilde{K}_{j})}^{t}\right)<\varepsilon.
$$

From \eqref{eq:e7} we have the following bound for \eqref{eq:T3}
\begin{eqnarray*}
\sum_{i=1}^{N}(d-s)I_{s}\left(\mu_{\tilde{\varphi}_{i}(\tilde{K}_{i})}^{t}\right)
& \leq &
(1+\varepsilon)^{2d+1}\left(\sum_{i=1}^{N}(d-t)I_{t}\left(\mu_{\tilde{\varphi}_{i}(\tilde{K}_{i})}^{t}\right)\right)+\varepsilon(1+
\varepsilon)^{d+1}\left(\sum_{i=1}^{N}\mu_{\tilde{\varphi}_{i}(\tilde{K}_{i})}^{t}(\Rp)^{2}\right)\\
& \leq &
(1+\varepsilon)^{2d+1}(d-t)I_{t}(\mu^{t})+\varepsilon(1+\varepsilon)^{d+1}.\end{eqnarray*}
For $s\in(s_{0},d)$, the preceding estimates, together with
\eqref{eq:messyCond}, gives 
\begin{eqnarray*}
\limsup_{t\uparrow d}\ (d-s)I_{s}(\mu^{t}) &\leq & \left[\varepsilon\left[2+(1+\varepsilon)^{d+1}\right]+2\sqrt{\varepsilon(1+\varepsilon)^{2d+1}\Id(\lambda^d)+ \varepsilon^{2}(1+ \varepsilon)^{d+1}}\right]\\
&+&\left[(1+\varepsilon)^{2d+1}\right]\limsup_{t\uparrow d}\ (d-t)I_{t}(\mu^{t}) \\
&\le & \eta+(1+\eta)\limsup_{t\uparrow d}\ (d-t)I_{t}(\mu^{t}).
\end{eqnarray*}
\end{proof}

\subsection{Proof of Theorem 1.3}

\begin{proof}[proof of theorem \ref{th:Convergence}]

Let $A$ satisfy the hypotheses of Theorem \ref{th:Convergence} and hence of Theorem \ref{th:Existence}. Let $\lambda^d:=
\Hdr/\Hd(A)$ denote the unique minimizer of $\Id$ over $\MAp$. Let
$\psi$ be any weak-star cluster point of $\mu^{s}$ as $s\uparrow d$,
and let $\left\{ s_{n}\right\} _{n=1}^{\infty}\uparrow d$ such that
$\mu^{s_{n}}\cws\psi$. Let $\eta>0$ be arbitrary. Let $s_{0}$ be
the value provided by lemma \ref{OnePlusEtaEmbedded} for this choice of $\eta$. For any $s\in(s_{0},d)$,
we have\begin{eqnarray*}
(d-s)I_{s}(\psi) & \leq & \liminf_{n\rightarrow\infty}\ (d-s)I_{s}(\mu^{s_{n}})\\
 & \leq & \limsup_{n\rightarrow\infty}\ (d-s_{n})I_{s_{n}}(\mu^{s_{n}})(1+\eta)+\eta\\
 & \leq & \limsup_{n\rightarrow\infty}\ (d-s_{n})I_{s_{n}}(\lambda^d)(1+\eta)+\eta\\
 & = & (1+\eta)\Id(\lambda^d)+\eta.\end{eqnarray*}
 As in the proof of Proposition \ref{FlatConvergence}, the first
inequality follows from the Principle of Descent, the second from
Lemma \ref{OnePlusEtaEmbedded}, and the third from the minimality
of $I_{s_{n}}(\mu^{s_{n}})$. Since $s$ may be chosen arbitrarily close
to $d$, $\Id(\psi)\leq(1+\eta)\Id(\lambda^d)+\eta$. Since $\eta$
was also arbitrarily chosen, $\Id(\psi)\leq\Id(\lambda^d)$. The uniqueness of the minimizer $\lambda^d$
ensured by Theorem \ref{th:Existence} proves that $\psi=\lambda^d$ and is sufficient to prove Theorem~\ref{th:Convergence}.
\end{proof}

\bibliography{References}
\bibliographystyle{abbrv} 

\end{document}